\newtheorem{theorem}{Theorem}
\newtheorem{lemma}[theorem]{Lemma}
\theoremstyle{remark}
\newtheorem*{note}{Note}
\newcommand{\ols}{\overline{s}}
\DeclareMathOperator{\bigO}{O}
\DeclareMathOperator{\littleo}{o}
\newcommand{\hs}{\hat{s}}
\newcommand{\hz}{\hat{z}}
\newcommand{\Uw}{U^{[w]}_w}
\newcommand{\Dzero}{D^{[0]}_w}
\title{Semi-flexible directed polymers in a strip with attractive walls}
\author{Nicholas R. Beaton\thanks{\href{mailto:nrbeaton@unimelb.edu.au}{nrbeaton@unimelb.edu.au}}}
\author{Leo Li\thanks{\href{mailto:leo.li2@unimelb.edu.au}{leo.li2@unimelb.edu.au}}}
\affil{School of Mathematics and Statistics, The University of Melbourne, Australia}
\author{Jonathon Liu\thanks{\href{mailto:leol3@student.unimelb.edu.au}{leol3@student.unimelb.edu.au}}}
\affil{School of Mathematics and Statistics, The University of Sydney, Australia}
\author{Thomas Wong\thanks{\href{mailto:thomas.wong@hw.ac.uk}{thomas.wong@hw.ac.uk}}}
\affil{Department of Mathematics, Heriot Watt University, Edinburgh, Scotland}
\begin{document}

\maketitle

\begin{abstract}
    We study a model of a semiflexible long chain polymer confined to a two-dimensional slit of width $w$, and interacting with the walls of the slit. The interactions with the walls are controlled by Boltzmann weights $a$ and $b$, and the flexibility of the polymer is controlled by another Boltzmann weight $c$. This is a simple model of the steric stabilisation of colloidal dispersions by polymers in solution. We solve the model exactly and compute various quantities in $(a,b,c)$-space, including the free energy and the force exerted by the polymer on the walls of the slit. In some cases these quantities can be computed exactly for all $w$, while for others only asymptotic expressions can be found. Of particular interest is the zero-force surface -- the manifold in $(a,b,c)$-space where the free energy is independent of $w$, and the loss of entropy due to confinement in the slit is exactly balanced by the energy gained from interactions with the walls.
\end{abstract}

\section{Introduction}\label{sec:intro}

Polymers in dilute solution, confined to a narrow channel or between two plates, lose configurational entropy and thus exert a repulsive (outward) force on the walls of the confined space. However, when the polymers experience an attractive interaction with the walls, there is also a force in the reverse direction, and the polymers can work to pull the walls together. This is seen in the process of steric stablisation, where polymer molecules in solution with (much larger) colloidal particles are attracted to the particles but then serve to hold them apart and maintain the stability of the solution~\cite{daoud_statistics_1977,nagele_1997,napper_1983,rudhardt_direct_1998,verma_entropic_1998,zhulina_theory_1990}.

Self-avoiding walks (SAWs) are a classical model of long-chain polymers in dilute solution. To model a polymer in confinement one can restrict a SAW (in $\mathbb{Z}^d$, say) to a strip (in two dimensions) or slab (in three dimensions) of width $w$. Some rigorous results are known about this model  (see \cite{janse_van_rensburg_self-avoiding_2006} for a thorough treatment). Define the $w$-slab 
\begin{equation}
    \mathbb{S}_w = \{(x_1,\dots,x_d)\in\mathbb{Z}^d \,|\, 0\leq x_d\leq w\},
\end{equation}
and let $\mathcal{W}_{w,n}$ be the set of $n$-step SAWs which start at the origin and stay in $\mathbb{S}_w$. For $\phi\in\mathcal{W}_{w,n}$, let $m_a(\phi)$ (resp.~$m_b(\phi)$) be the number of vertices of $\phi$ in the hyperplane $x_d=0$ (resp.~$x_d=w$), excluding the origin. Then the size-$n$ partition function of the model is
\begin{equation}
    C_{w,n}(a,b) = \sum_{\phi\in\mathcal{W}_{w,n}} a^{m_a(\phi)} b^{m_b(\phi)}.
\end{equation}

It is known~\cite{janse_van_rensburg_self-avoiding_2006} that the limiting free energy
\begin{equation}
    \kappa_w(a,b) = \lim_{n\to\infty}\frac1n\log C_{w,n}(a,b)
\end{equation}
exists and is a convex, continuous and almost-everywhere differentiable function of $\log a$ and $\log b$. When $b=1$ (ie.~there are interactions with only the bottom wall) it has been shown that $\kappa_w(a,1) \to \kappa(a)$ as $w\to\infty$, where $\kappa(a)$ is the free energy of adsorbing SAWs in a half-space. Moreover it was conjectured that $\kappa_w(a,b) \to \max\{\kappa(a),\kappa(b)\}$ as $w\to\infty$.

For finite $w$ it was also conjectured that there is a zero-force curve in the $a$-$b$ plane where $\kappa_w(a,b)=\kappa_{w-1}(a,b)$. Below this curve $\kappa_w(a,b)>\kappa_{w-1}(a,b)$ (corresponding to a repulsive force between the planes) and above it $\kappa_w(a,b)<\kappa_{w-1}(a,b)$ (corresponding to an attractive force). As $w\to\infty$ these curves were predicted to approach a limiting curve, with asymptotes $a=1$ and $b=1$ and passing through $(a,b)=(a_\mathrm{c},a_\mathrm{c})$, where $a_\mathrm{c}$ is the critical point for adsorbing SAWs in a half-space.

A Monte Carlo study of this model was conducted in~\cite{rensburg_self-avoiding_2005}. Similar results to~\cite{janse_van_rensburg_self-avoiding_2006} were found. The authors also approximated the force between the plates as
\begin{align}
    F_{w,n}(a,b) &= \frac1n\log C_{w+1,n}(a,b) - \frac1n\log C_{w,n}(a,b) \\
    &= \kappa_{w+1,n}(a,b) - \kappa_{w,n}(a,b),
\end{align}
and, with $F_w(a,b) = \lim_{n\to\infty} F_{w,n}(a,b)$, confirmed a prediction of Daoud and de Gennes~\cite{daoud_statistics_1977} that 
\begin{equation}\label{eqn:daoud_degennes}
    F_{w}(1,1) \sim \mathrm{const.}\times w^{-1-1/\nu}
\end{equation}
where $\nu$ is the metric exponent for dilute SAWs (expected to be $\frac34$ in two dimensions and $\approx 0.588$ in three dimensions). This scaling form is also expected to hold for $a,b<a_\mathrm{c}$.

In~\cite{brak_directed_2005} a two-dimensional directed version of this model was considered (see also~\cite{wong_enumeration_2015}). The setup is the same as described above, but now the walks start at the origin and may only take steps $(1,1)$ or $(1,-1)$. It was found that as $w\to\infty$,
\begin{equation}
    \kappa_w(a,b) \to \max\{\kappa(a),\kappa(b)\} = \begin{cases} \log 2 & a,b\leq 2 \\
    \log \left(\frac{a}{\sqrt{a-1}}\right) & a > \max\{2,b\} \\
    \log \left(\frac{b}{\sqrt{b-1}}\right) & \text{otherwise.}\end{cases}
\end{equation}
The force on the walls was defined as $F_w(a,b) = \frac{\partial}{\partial w}\kappa_w(a,b)$, and it was found that $F_w(a,b) = 0$ along the curve $ab-a-b=0$. Above this curve the force is positive and short-ranged (decays exponentially with $w$), while below the curve the force is negative and can be long- or short-ranged (decays polynomially with $w$), depending on whether one of $a$ or $b$ is greater than $a_\mathrm{c} = 2$. For $a,b<2$, the asymptotic form~\eqref{eqn:daoud_degennes} also holds for directed walks (the corresponding value of $\nu$ is $\frac12$).

In this paper we generalise the work of~\cite{brak_directed_2005} by taking the \emph{flexibility} of the polymers into account. Real-world polymers can have a certain level of stiffness or rigidity, and this affects phase transitions and critical behaviour. A standard method for taking this into account in mathematical polymer models is to assign weights to consecutive segments of the polymer according to the angle between the segments -- for example, one can assign a Boltzmann weight $c$ to consecutive pairs of collinear segments (this is the method used here, and see for example~\cite{hsu_semi-flexible_2013,krawczyk_semi-flexible_2015,owczarek_exact_2009,zivic_semiflexible_2018}). As the weight is increased, the average walk tends to have more long straight segments and fewer bends.

The layout of the paper is as follows. In \cref{sec:the_model} we define the model and some key quantities, and work through two different methods for solving it exactly. In \cref{sec:symmetric} we consider the special case $a=b$, analysing the asymptotic behaviour in different parts of $(a,c)$-space. In \cref{sec:asymmetric} we study the full model in $(a,b,c)$-space. In \cref{sec:sampling} we use the method of generating trees to randomly sample long walks for a variety of different $(a,b,c)$ values. Some concluding remarks are given in \cref{sec:conclusion}.

\section{The model}\label{sec:the_model}

\subsection{Definitions}\label{ssec:definitions}

We consider a directed walk propagating along a slit. Specifically, consider a walk beginning at the origin, which takes steps $(1, \pm 1)$. Further, fix a $w\in \mathbb{N}$ and restrict the allowed vertices to
$\{(x,y)\in \mathbb{Z}^2 \ | \ 0\leq y \leq w \}.$
Call $w$ the width of the slit, and let $\mathcal{W}_w$ be the set of all walks in the slit of height $w$.

For any such walk $\phi$, let $|\phi|$ denote the length of this walk, i.e. the number of steps, which is also the $x$-coordinate of the terminating vertex. Physical interactions are incorporated into the model by associating an energy to each walk, with energy contributions (Boltzmann weights) arising from three kinds of interactions. Walks gain weight $\epsilon_a$ for each contact with the bottom wall (excluding the initial contact at the origin), weight $\epsilon_b$ for each contact with the top wall, and weight $\epsilon_c$ for each pair of consecutive up or down steps (`stiffness points'). If $\phi$ touches the bottom wall $m_a(\phi)$ times, the top wall $m_b(\phi)$ times, and has $m_c(\phi)$ stiffness points, then its associated energy is $m_a(\phi)\epsilon_a+m_b(\phi)\epsilon_b+m_c(\phi)\epsilon_c.$ See \cref{fig:example}.

\begin{figure}
    \centering
    \resizebox{\textwidth}{!}{
    \begin{tikzpicture}
    \tikzset{bc/.style = {circle, draw, line width=4pt, draw=red, fill=blue, inner sep=3pt}};
    \tikzset{tc/.style = {circle, draw, line width=4pt, draw=green, fill=blue, inner sep=3pt}};
    \tikzset{sp/.style = {draw, line width=3pt, draw=orange, fill=blue, inner sep=4pt}};
    \draw [gray] (0,0) grid (25,6);
    \draw [gray, fill=gray] (0,0) rectangle (25,-0.2);
    \draw [gray, fill=gray] (0,6) rectangle (25,6.2);
    \draw [blue, line width=0.1cm] (0,0) -- (1,1) -- (2,2) -- (3,1) -- (4,0) -- (5,1) -- (6,0) -- (7,1) -- (8,2) -- (9,3) -- (10,4) -- (11,3) -- (12,4) -- (13,5) -- (14,6) -- (15,5) -- (16,6) -- (17,5) -- (18,4) -- (19,3) -- (20,4) -- (21,3) -- (22,2) -- (23,1) -- (24,0) -- (25,1);
    \node [bc] at (4,0) {};
    \node [bc] at (6,0) {};
    \node [bc] at (24,0) {};
    \node [tc] at (14,6) {};
    \node [tc] at (16,6) {};
    \node [sp] at (1,1) {};
    \node [sp] at (3,1) {};
    \node [sp] at (7,1) {};
    \node [sp] at (8,2) {};
    \node [sp] at (9,3) {};
    \node [sp] at (12,4) {};
    \node [sp] at (13,5) {};
    \node [sp] at (17,5) {};
    \node [sp] at (18,4) {};
    \node [sp] at (21,3) {};
    \node [sp] at (22,2) {};
    \node [sp] at (23,1) {};
    \end{tikzpicture}
    }
    \caption{A semiflexible directed path in a strip of width 6. This path has length 25, three contacts with the bottom wall (red circles), two contacts with the top wall (green circles), and 12 stiffness points (orange squares). It thus contributes weight $a^3b^2c^{12}$ to $Z_{6,25}(a,b,c)$.}
    \label{fig:example}
\end{figure}
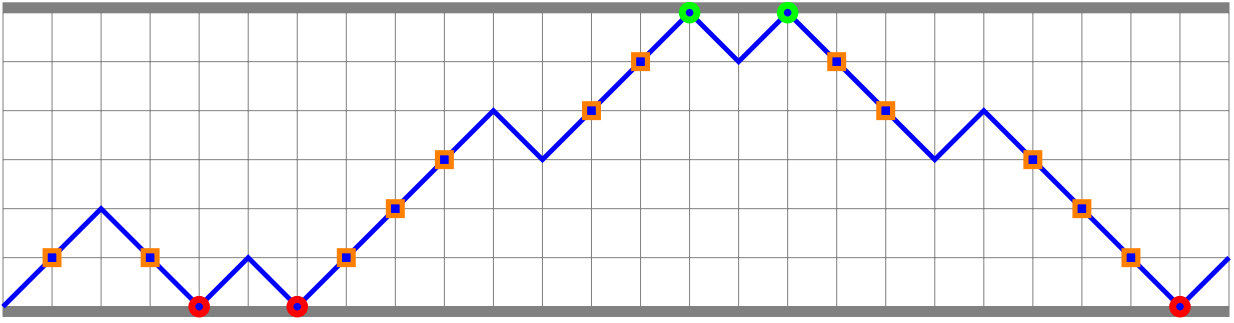

The canonical partition function for this system will be 
\begin{equation}
    Z_{w,n}(a,b,c)=\sum_{\phi\in\mathcal{W}_{w,n}}\exp\left(-\frac{m_a(\phi)\epsilon_a+m_b(\phi)\epsilon_b+m_c(\phi)\epsilon_c}{k_B T} \right)=
    \sum_{\phi\in\mathcal{W}_{w,n}} a^{m_a(\phi)}b^{m_b(\phi)}c^{m_c(\phi)}
\end{equation}
where $\mathcal{W}_{w,n}$ is the set of all length $n$ walks in the width $w$ strip, $k_B$ is the Boltzmann constant, $T$ is the absolute temperature, and $a=\exp\left(-\frac{\epsilon_a}{k_B T}\right)$, $b=\exp\left(-\frac{\epsilon_b}{k_B T}\right)$, and
$c=\exp\left(-\frac{\epsilon_c}{k_B T}\right)$ are Boltzmann weights.

Throughout the paper we always assume that $a,b,c>0$.

\begin{lemma}\label{lem:free_energy}
The \emph{free energy}
\begin{equation}
\kappa_w(a,b,c)=\lim_{n\to \infty} \frac{1}{n}\log Z_{w,n}(a,b,c)
\end{equation}
exists for all $a,b,c>0$. It is a continuous, almost-everywhere differentiable, and strictly increasing function of $a,b,c$.
\end{lemma}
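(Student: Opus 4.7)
For existence, I would restrict to walks of even length $2n$ that start at the origin and end at height $0$, with partition function $Z^{\star}_{w,2n}(a,b,c)$. Concatenating two such walks yields another walk of the same type, and crucially the first walk ends with a $\downarrow$ step while the second begins with an $\uparrow$ step, so no stiffness point is created at the junction; the shared vertex at height $0$ is a bottom-wall contact for the first walk but is the initial (hence uncounted) vertex for the second, so $m_a$ also adds correctly. Hence $Z^{\star}_{w,2m}\,Z^{\star}_{w,2n} \le Z^{\star}_{w,2(m+n)}$, and Fekete's lemma gives convergence of $\tfrac{1}{2n}\log Z^{\star}_{w,2n}$ to some $\kappa_w$, finite by the crude bound $Z_{w,n} \le 2^n \max\{1,a,b,c\}^{3n}$. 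To promote this limit to the full $Z_{w,n}$, I would extend any walk of length $n$ ending at height $h$ to one ending at height $0$ by appending $h \le w$ consecutive $\downarrow$ steps, which changes its weight by a factor bounded uniformly in $n$; combined with $Z_{w,n} \ge Z^{\star}_{w,n}$ for even $n$ and a standard comparison between odd and even lengths, this identifies $\lim_n \tfrac{1}{n}\log Z_{w,n}$ with $\kappa_w$.

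For continuity and almost-everywhere differentiability, I would use that $Z_{w,n}(a,b,c)$ is a polynomial in $a,b,c$ with non-negative coefficients, so $\log Z_{w,n}$ is a convex function of $(\log a, \log b, \log c)$ (it is the log-partition function of a three-parameter exponential family). As the pointwise limit of convex functions, $\kappa_w$ is itself convex in $(\log a, \log b, \log c)$ on the open orthant $(0,\infty)^3$; continuity and a.e.\ differentiability then follow from standard convex analysis on open convex sets.

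For strict monotonicity I would introduce the transfer matrix $T(a,b,c)$ acting on states $(y,s)$ (current height and direction of the last step). This matrix is non-negative, irreducible (any state reaches any other via a suitable sequence of $\uparrow$ and $\downarrow$ steps in the strip), and $Z_{w,n}$ is a fixed positive linear combination of the entries of $T^{n-1}$, so $\kappa_w = \log \rho(T)$. Each entry of $T$ is a monomial in $a,b,c$, so any strict increase of one of these parameters strictly increases at least one entry; by the Perron--Frobenius strict-monotonicity principle (for irreducible $A \ge 0$ and $A' \ge A$ componentwise with $A' \ne A$, one has $\rho(A) < \rho(A')$), strict monotonicity of $\kappa_w$ in each argument follows. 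The hardest part is that strict monotonicity can fail in degenerate regimes: for $w = 1$ every walk is the unique zig-zag with no stiffness points, so $\kappa_1 = \tfrac12\log(ab)$ is genuinely independent of $c$; one would want to either assume $w \ge 2$ for the claim in $c$, or verify by inspection that for each fixed $w$ the transfer matrix has entries strictly dependent on each of $a,b,c$, which is straightforward for $w \ge 2$.
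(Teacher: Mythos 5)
Your proof follows essentially the same route as the paper's: supermultiplicativity of the loop partition functions plus Fekete's lemma, extension to arbitrary endpoint heights by appending a bounded number of steps with a uniformly bounded weight correction, convexity in $(\log a,\log b,\log c)$ for continuity and a.e.\ differentiability, and Perron--Frobenius for the transfer matrix to get strict monotonicity. Your caveat that strict monotonicity in $c$ fails for $w=1$ (where no stiffness points can occur, so $\kappa_1=\tfrac12\log(ab)$) is correct and is a degenerate case the paper's statement glosses over.
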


To prove \cref{lem:free_energy} it will be useful to introduce another statistic on walks. Let $h(\phi)$ be the $y$-coordinate of the terminating vertex of $\phi$, i.e.~the final height of $\phi$. The corresponding partition function is then
\begin{equation}
    Z_{w,n,k}(a,b,c) = \sum_{\substack{\phi\in\mathcal{W}_{w,n} \\ h(\phi) = k}} a^{m_a(\phi)}b^{m_b(\phi)}c^{m_c(\phi)}
\end{equation}
Note that $Z_{w,n,k}(a,b,c) = 0$ if $k\not\equiv n \,(\text{mod } 2)$. 

We will sometimes refer to walks which end on the bottom wall as \emph{loops}, and walks which end on the top wall as \emph{bridges}. Loops and bridges have partition functions $Z_{w,n,0}(a,b,c)$ and $Z_{w,n,w}(a,b,c)$ respectively.

\begin{proof}[Proof of \cref{lem:free_energy}]
We begin by noting that if $\phi_1$ and $\phi_2$ are two walks with $h(\phi_1) = h(\phi_2) = 0$, then they can be concatenated to form a longer walk $\phi$, where
\begin{align}
  & &  |\phi| &= |\phi_1| + |\phi_2|, & & \\
  & &  \qquad h(\phi) &= 0, & & \\
    m_a(\phi) &= m_a(\phi_1) + m_a(\phi_2), & m_b(\phi) &= m_b(\phi_1) + m_b(\phi_2), & m_c(\phi) &= m_c(\phi_1) + m_c(\phi_2).
\end{align}
It follows that, for even $m$ and $n$,
\begin{equation}
    Z_{w,m+n,0}(a,b,c) \geq Z_{w,m,0}(a,b,c)Z_{w,n,0}(a,b,c),
\end{equation}
and hence $-\log Z_{w,n,0}(a,b,c)$ is a subadditive sequence in $n$. A standard result on subadditive sequences~\cite{hille_functional_1948} then implies that the limit
\begin{equation}\label{eqn:fe_for_height0}
    \kappa_{w,0}(a,b,c) = \lim_{n\to\infty} \frac1n \log Z_{w,n,0}(a,b,c)
\end{equation}
exists, where the limit is taken through even $n$. The fact that $\kappa_{w,0}(a,b,c)$ is continuous and almost-everywhere differentiable can also be proved using standard techniques -- see for example~\cite{janse_van_rensburg_self-avoiding_2006}. The fact that $\kappa_{w,0}(a,b,c)$ is strictly increasing follows from the fact that it is the spectral radius of a finite irreducible matrix (see e.g.~\cite[Chapter 8]{meyer2000}).

Now let $(h_n)_{n\geq0}$ be a sequence with $h_n \in [0,w]$ and $h_n \equiv n\,(\text{mod }2)$. A walk of length $n-h_n$ ending at height 0 can be extended by $h_n$ steps to become a walk of length $n$ ending at height $h_n$, with the addition of at most $h_n$ stiffness sites and at most one top contact. If we set $b_+ = \max\{1,b^{-1}\}$ and $c_+=\max\{1,c^{-1}\}$, then this implies
\begin{equation}\label{eqn:Znh_lowerbound}
    Z_{w,n-h_n,0}(a,b,c) \leq b_+ c_+^{h_n} Z_{w,n,h_n}(a,b,c).
\end{equation}
Similarly, a walk of length $n$ ending at height $h_n$ can be extended by $h_n$ steps to become a walk ending at height 0, again with the addition of at most $h_n$ stiffness sites and at most one bottom contact. With $a_+ = \max\{1,a^{-1}\}$, we get
\begin{equation}\label{eqn:Znh_upperbound}
    Z_{w,n,h_n}(a,b,c) \leq a_+ c_+^{h_n} Z_{w,n+h_n,0}(a,b,c).
\end{equation}
For each of~\eqref{eqn:Znh_lowerbound} and~\eqref{eqn:Znh_upperbound} take the logs, divide by $n$, and take the $\liminf$ and $\limsup$ respectively. By~\eqref{eqn:fe_for_height0}, it follows that
\begin{equation}\label{eqn:Znh_fe}
    \lim_{n\to\infty} \frac1n \log Z_{w,n,h_n}(a,b,c) = \kappa_{w,0}(a,b,c).
\end{equation}

Finally let $h^-_n \equiv h^-_n(a,b,c)$ be such that $Z_{w,n,h^-_n}(a,b,c) \leq Z_{w,n,h}(a,b,c)$ for all $0\leq h \leq w$, and likewise let $h^+_n \equiv h^+_n(a,b,c)$ be such that $Z_{w,n,h^+_n}(a,b,c) \geq Z_{w,n,h}(a,b,c)$ for all $0\leq h \leq w$ (all constrained so that $h^-_n \equiv h^+_n \equiv h \equiv n \,(\text{mod } 2)$). Then
\begin{align}
    \left\lceil\frac{w+1}{2}\right\rceil Z_{w,n,h^-_n}(a,b,c) \leq Z_{w,n}(a,b,c) \leq \left\lceil\frac{w+1}{2}\right\rceil Z_{w,n,h^+_n}(a,b,c).
\end{align}
Again take logs, divide by $n$ and take the limit. By~\eqref{eqn:Znh_fe}, the result follows, with $\kappa_{w,0}(a,b,c) = \kappa_w(a,b,c)$.
\end{proof}

From the free energy, we can obtain the effective force exerted on the walls of the slit due to the polymer,
\begin{equation}
    \mathcal{F}_w(a,b,c) = \frac{\partial}{\partial w}\kappa_w(a,b,c).
\end{equation}
In particular, we are interested in `zero-force' curves and surfaces -- the loci of points $(a,b,c)$ where $\mathcal{F}_w(a,b,c)=0$.

As per work in previous papers~\cite{brak_directed_2005,wong_enumeration_2015}, we will make use of the generating function of the system,
\begin{equation}
    G_w(a,b,c;z) \equiv G_w = \sum_{n=0}^\infty Z_{w,n}(a,b,c)z^n.
\end{equation}
Viewed as a power series in $z$, this generating function has a nonzero radius of convergence $R_w(a,b,c)$ about the origin, and has a dominant singularity at $z=z_w(a,b,c)$ on the positive real axis. There is a relation between the dominant singularity $z_w(a,b,c)$ and the free energy,
\begin{equation}\label{eqn:fe(z)}
    \kappa_w(a,b,c)=-\log z_w(a,b,c).
\end{equation}

We will also use a generalisation of $G_w$ which takes into account the final height of walks:
\begin{equation}
    F_w(a,b,c;z,s) \equiv F_w(s) = \sum_{n=0}^\infty \sum_{h=0}^w Z_{w,n,h}(a,b,c)z^n s^h.
\end{equation}
Of course $F_w(1) = G_w$.

Partition $\mathcal{W}_w$ into `down walks' $\mathcal{D}_w$ and `up walks' $\mathcal{U}_w$ -- the former being the set of walks with final step $(1,-1)$, and the latter being the set of walks with final step $(1,1)$. It is consistent to place the 0-length walk in $\mathcal{D}_w$ since all walks ending at $y=0$ are $\mathcal{D}_w$ walks.

Now let $D_w(a,b,c;z,s) \equiv D_w(s)$ be the generating function for down walks, and likewise let $U_w(a,b,c;z,s) \equiv U_w(s)$ be the generating function for up walks. In particular, $F_w(s)=D_w(s)+U_w(s)$. These two generating functions are needed to construct a recurrence relation.

For a formal power series $f(z)=\sum_{n=0}^\infty a_n z^n$, we use the notation $[z^k]f(z)=a_k$. Then $[s^h]F_w(s)$ is the generating function for walks ending at some height $h$.

\begin{lemma}\label{lem:fe_all_gfs}
For $0\leq h \leq w$, the generating functions 
\begin{equation}
    F_w(1), D_w(1), U_w(1), [s^h]F_w(s), [s^h]D_w(s) \text{ and } [s^h]U_w(s)
\end{equation}
all have the same radius of convergence, namely $z_w(a,b,c) = \exp(-\kappa_w(a,b,c))$. (Excluding the trivial cases $[s^w]D_w(s) = [s^0]U_w(s) = 0$.)
\end{lemma}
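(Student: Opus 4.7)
The plan is to split the argument into the easy inequality (each series has radius of convergence at least $z_w$) and the harder one (at most $z_w$), the latter done via an extension argument showing coefficients grow at rate at least $e^{\kappa_w}$.

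Every one of the six series is majorised coefficient-wise by $F_w(1) = G_w$, whose radius of convergence is exactly $z_w = e^{-\kappa_w(a,b,c)}$ by the definition of $\kappa_w$ in \cref{lem:free_energy} together with the Cauchy--Hadamard formula. So every series has radius $\geq z_w$, and only the matching lower bound on coefficient growth requires work. For the series $[s^h]F_w(s) = \sum_n Z_{w,n,h}(a,b,c)\,z^n$ with $0\leq h \leq w$ fixed, this is essentially already in the paper: applying~\eqref{eqn:Znh_fe} to the constant sequence $h_n \equiv h$ (along $n$ of the correct parity) gives $\lim_{n\to\infty}\tfrac{1}{n}\log Z_{w,n,h}(a,b,c) = \kappa_w(a,b,c)$, so $[s^h]F_w(s)$ has radius $z_w$.

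For the decomposition $[s^h]F_w(s) = [s^h]D_w(s) + [s^h]U_w(s)$, I would first dispatch the boundaries: any walk ending at height $0$ has final step $(1,-1)$, so $[s^0]D_w(s) = [s^0]F_w(s)$, and dually $[s^w]U_w(s) = [s^w]F_w(s)$; the cases $[s^0]U_w(s) = [s^w]D_w(s) = 0$ are precisely those excluded from the statement. For interior $1 \leq h \leq w-1$, I would use a four-step local extension: any down walk of length $n$ ending at height $h$ can be prolonged by the step sequence $(1,1),(1,-1),(1,-1),(1,1)$ to produce an up walk of length $n+4$ ending at the same height $h$. This requires only $1 \leq h \leq w-1$, creates exactly one new stiffness point (at the middle $(1,-1),(1,-1)$ pair), and creates at most one new contact with each wall, so there is a constant $C(a,b,c) > 0$ depending only on $a,b,c$ with
\begin{equation*}
    Z_{w,n+4,h}^{U}(a,b,c) \;\geq\; C(a,b,c)\, Z_{w,n,h}^{D}(a,b,c).
\end{equation*}
The symmetric prolongation $(1,-1),(1,1),(1,1),(1,-1)$ gives the reverse inequality. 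Combined with $Z_{w,n,h} = Z_{w,n,h}^{D} + Z_{w,n,h}^{U}$, whose exponential growth rate is $e^{\kappa_w}$ by the previous paragraph, these inequalities force both $Z_{w,n,h}^{D}$ and $Z_{w,n,h}^{U}$ to grow at rate $e^{\kappa_w}$.

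The series $D_w(1)$ and $U_w(1)$ now follow by summing over heights: $Z_{w,n}^{D} \geq Z_{w,n,0}^{D} = Z_{w,n,0}$ (for even $n$), which grows at rate $e^{\kappa_w}$, so $D_w(1)$ has radius $z_w$; the identical argument with $h=w$ handles $U_w(1)$. The main obstacle is genuinely minor: verifying that the four-step extensions fit inside the strip and give a clean weight comparison. A slicker but less elementary alternative would be to invoke Perron--Frobenius on the finite, irreducible transfer matrix acting on (height, last-direction) pairs, which yields the common dominant singularity uniformly; I prefer the extension argument because it stays within the combinatorial framework already set up.
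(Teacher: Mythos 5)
Your proposal is correct and follows essentially the same strategy as the paper's sketch: coefficient-wise domination by $F_w(1)$ for the upper bound, and bounded-length extensions with bounded weight distortion (via~\eqref{eqn:Znh_fe} and its relatives) for the matching lower bound. The only cosmetic difference is that you separate the down/up contributions at a fixed interior height by a local four-step detour at height $h$, whereas the paper routes through height $0$ and appends a final step of the required direction; both are instances of the same concatenation argument and both are valid.
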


\begin{proof}[Proof (sketch)]
The proof of \cref{lem:free_energy} already established this for $F_w(1)$. For $D_w(1)$ and $U_w(1)$, the proof works in much the same way -- start with walks ending at height 0, append steps to get up to height $h_n$ (ending with $(1,1)$ or $(1,-1)$ as required), and then again to get back to height 0.

For the $[s^h]$ generating functions, we must take some care with parity issues. Assume for now that $h$ is even. Then $[s^h]F_w(s)$ contains only even powers of $z$. For even $n$, by the same arguments used in \cref{lem:free_energy}, we have
\begin{equation}
    Z_{w,n-h,0}(a,b,c) \leq b_+c_+^h Z_{w,n,h}(a,b,c) \leq a_+b_+c_+^{2h}Z_{w,n+h,0}(a,b,c).
\end{equation}
Take logs, divide by $n$ and take the limit through even values of $n$ only. By~\eqref{eqn:fe_for_height0} the limit is $\kappa_{w,0}(a,b,c) = \kappa_w(a,b,c)$, and so the result follows for $[s^h]F_w(s)$ with even $h$. 

For odd $h$, and for $[s^h]D_w(s)$ and $[s^h]U_w(s)$, the proof is similar.
\end{proof}

\subsection{Comparison with the half-plane model}

Let $\mathcal{W}^+_n$ be the set of directed half-plane walks of length $n$ which start on the surface, and for such a walk $\phi$ let $m_a(\phi)$ be the number of visits to the surface (excluding the initial vertex) and $m_c(\phi)$ be the number of stiffness sites. We then have the partition function
\begin{equation}
    Z^+_n(a,c) = \sum_{\phi\in\mathcal{W}^+_n} a^{m_a(\phi)}c^{m_c(\phi)}
\end{equation}
and generating function
\begin{equation}
    G^+(a,c;z) = \sum_{n=0}^\infty Z^+_n(a,c)z^n.
\end{equation}
It will be useful to also define corresponding quantities for half-plane loops:
\begin{equation}
    Z^+_{n,0}(a,c) = \sum_{\substack{\phi\in\mathcal{W}^+_n \\ h(\phi)=0}} a^{m_a(\phi)}c^{m_c(\phi)} \qquad\text{and}\qquad G^+_0(a,c;z) = \sum_{n=0}^\infty Z^+_{n,0}(a,c)z^n.
\end{equation}

It is straightforward to show that
\begin{align}
    G^+_0(a,c;z) &= \frac{2-a-az^2+ac^2z^2-a\sqrt{(1-z^2+c^2z^2)^2-4c^2z^2}}{2(1-a-az^2+a^2z^2+ac^2z^2} \\
    G^+(a,c;z) &= \frac{(1-a-acz)(1-2cz-z^2+c^2z^2)+(1-a+acz)\sqrt{(1-z^2+c^2z^2)^2-4c^2z^2}}{2(1-z-cz)(1-a-az^2+a^2z^2+ac^2z^2)}.
\end{align}

Both $G^+_0$ and $G^+$ have the same dominant singularity,
\begin{equation}\label{eqn:halfplane_domsing}
    z^+(a,c) = \begin{cases} \frac{1}{c+1} & a \leq c+1 \\ \frac{\sqrt{a-1}}{\sqrt{a(a+c^2-1)}} & a>c+1, \end{cases}
\end{equation}
and we define the corresponding free energy $\kappa^+(a,c) = -\log z^+(a,c)$.

The following theorem shows that when the walks interact with only one side of the width-$w$ strip, as $w\to\infty$ we simply obtain the half-plane model.
\begin{theorem}\label{thm:strip_to_halfplane}
For all $a,c>0$,
\begin{equation}\label{eqn:kappaw_limit}
    \lim_{w\to\infty} \kappa_w(a,1,c) = \kappa^+(a,c).
\end{equation}
\end{theorem}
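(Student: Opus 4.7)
The plan is to sandwich $\lim_{w\to\infty}\kappa_w(a,1,c)$ between $\kappa^+(a,c)$ on both sides, exploiting the fact that the choice $b=1$ eliminates the contribution of top-wall contacts.

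The upper bound is immediate: any walk in the strip $\mathcal{W}_w$ is also a half-plane walk, and when $b=1$ its Boltzmann weight in the strip model, $a^{m_a}c^{m_c}$, coincides with its half-plane weight. Therefore $Z_{w,n}(a,1,c) \leq Z^+_n(a,c)$ for all $w$ and $n$, giving $\kappa_w(a,1,c) \leq \kappa^+(a,c)$ for every $w$, and in particular $\limsup_{w\to\infty}\kappa_w(a,1,c) \leq \kappa^+(a,c)$.

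For the lower bound I would concatenate half-plane loops. The key geometric observation is that a half-plane loop of even length $N$ has maximum height at most $N/2$ and so fits inside a strip of width $\lceil N/2\rceil$ with its weight preserved (a top-wall contact at height $\lceil N/2\rceil$ carries weight $b=1$, just as an interior vertex does). This gives $Z_{\lceil N/2\rceil,N,0}(a,1,c) = Z^+_{N,0}(a,c)$. By the concatenation argument in the proof of \cref{lem:free_energy}, $-\log Z_{w,n,0}(a,1,c)$ is subadditive in even $n$, so $\kappa_w(a,1,c) = \sup_n \tfrac{1}{n}\log Z_{w,n,0}(a,1,c)$ over even $n$; the analogous statement for half-plane loops yields $\kappa^+(a,c) = \sup_N \tfrac{1}{N}\log Z^+_{N,0}(a,c)$ over even $N$. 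Given $\epsilon>0$, I would choose even $N$ with $\tfrac{1}{N}\log Z^+_{N,0}(a,c) \geq \kappa^+(a,c)-\epsilon$; then for every $w \geq \lceil N/2\rceil$ the inclusion of the narrower strip in the wider gives
\begin{equation*}
\kappa_w(a,1,c) \geq \tfrac{1}{N}\log Z_{w,N,0}(a,1,c) \geq \tfrac{1}{N}\log Z^+_{N,0}(a,c) \geq \kappa^+(a,c)-\epsilon,
\end{equation*}
and sending $\epsilon\to 0$ closes the sandwich.

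The only genuinely non-cosmetic piece of bookkeeping is the half-plane identity $\kappa^+(a,c) = \sup_N \tfrac{1}{N}\log Z^+_{N,0}(a,c)$. I would either repeat the Hille-style subadditivity argument of \cref{lem:free_energy} directly in the half-plane (loops concatenate cleanly, giving a superadditive $\log Z^+_{n,0}$), or alternatively note that the explicit $G^+_0(a,c;z)$ stated before the theorem has dominant singularity at $z^+(a,c)$ and combine this with superadditivity to promote the exponential-growth-rate $\limsup$ to a genuine $\sup$-limit. Once this is in hand, the argument above is purely combinatorial and presents no further obstacle.
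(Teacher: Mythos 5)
Your proposal is correct and follows essentially the same route as the paper: the upper bound via inclusion of strip walks in half-plane walks, and the lower bound via supermultiplicativity of loops (so the free energies are suprema) together with the observation that a half-plane loop of length $N$ fits in any strip of width $w\geq N/2$ with its weight unchanged when $b=1$. The only cosmetic differences are that the paper works with loop partition functions on both sides of the sandwich (justified by \cref{lem:fe_all_gfs}) rather than using the full $Z_{w,n}$ for the upper bound, and it likewise obtains $\kappa^+(a,c)=\sup_N \tfrac1N\log Z^+_{N,0}(a,c)$ from concatenation of half-plane loops, exactly as you anticipate.
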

\begin{proof}
By \cref{lem:fe_all_gfs} and~\eqref{eqn:halfplane_domsing} we can just focus on loops. In the half-plane and $w$-strip we have respectively
\begin{equation}
    \kappa^+(a,c) = \lim_{n\to\infty} \frac1n\log Z^+_{n,0}(a,c) \qquad\text{and}\qquad \kappa_w(a,1,c) = \lim_{n\to\infty} \frac1n\log Z_{w,n,0}(a,1,c).
\end{equation}
Since $Z_{w,n,0}(a,1,c) \leq Z^+_{n,0}(a,c)$ and $Z_{w,n,0}(a,1,c) \leq Z_{w+1,n,0}(a,1,c)$ (walks in a $w$-strip with no interactions on the top wall are also walks in a half-plane and in a $(w+1)$-strip), the limit in~\eqref{eqn:kappaw_limit} exists and is at most $\kappa^+(a,c)$.

To show that this limit is equal to $\kappa^+(a,c)$, first note that since $Z^+_{n,0}(a,c)$ and $Z_{w,n,0}(a,1,c)$ are supermultiplicative sequences (loops can be concatenated), we have
\begin{equation}\label{eqn:loops_supermult_sup}
    \kappa^+(a,c) = \sup_{n} \left\{\frac1n\log Z^+_{n,0}(a,c)\right\} \qquad\text{and}\qquad \kappa_w(a,1,c) = \sup_{n} \left\{\frac1n\log Z_{w,n,0}(a,1,c)\right\}.
\end{equation}
By definition of the limit, for any $\epsilon>0$ there exists $N$ such that
\begin{equation}
    0 \leq \kappa^+(a,c) - \frac1n\log Z^+_{n,0}(a,c) < \epsilon \qquad\text{for all } n\geq N.
\end{equation}
Choose an $\epsilon>0$ and take $N$ as above, and observe that $\frac1N\log Z_{w,N,0}(a,1,c) = \frac1N\log Z^+_{N,0}(a,c)$ for $w\geq \frac{N}{2}$. Hence
\begin{equation}
    0 \leq \kappa^+(a,c) - \frac1N\log Z_{w,N,0}(a,1,c) < \epsilon \qquad\text{for } w\geq\frac{N}{2}.
\end{equation}
But now by~\eqref{eqn:loops_supermult_sup}, $\frac1N\log Z_{w,N,0}(a,1,c)$ is a lower bound for $\kappa_w(a,1,c)$, so in fact
\begin{equation}
    0 \leq \kappa^+(a,c) - \kappa_w(a,1,c) < \epsilon.
\end{equation}
The result follows.
\end{proof}

We state without proof another result regarding half-plane walks, which can be easily derived by incorporating a variable into the above generating functions which tracks the endpoint height of walks. 

\begin{lemma}\label{lem:endpoint_height}
    For $0 < a < c+1$ consider the Boltzmann distribution on half-plane walks of length $n$: each walk $\phi\in\mathcal{W}^+_n$ is sampled with probability 
    \begin{equation}
        \mathbb{P}_n(\phi) = \frac{a^{m_a(\phi)}c^{m_c(\phi)}}{Z_n^+(a,c)}.
    \end{equation}
    Let $\langle \cdot \rangle_n$ denote expectation with respect to this distribution, and let $h(\phi)$ be the endpoint height of a walk $\phi$ as per \cref{ssec:definitions}. Then
    \begin{equation}
        \langle h \rangle_n = \sqrt{\frac{c\pi n}{2}}(1+\littleo(1)).
    \end{equation}
\end{lemma}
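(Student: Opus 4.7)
The plan is to introduce an auxiliary variable $s$ tracking the endpoint height, forming the half-plane generating function $F^+(a,c;z,s) = \sum_{n,h} Z^+_{n,h}(a,c)\,z^n s^h$, so that $F^+(a,c;z,1) = G^+(a,c;z)$. The mean endpoint height is then a ratio of coefficients,
\begin{equation*}
    \langle h \rangle_n = \frac{[z^n]\, \partial_s F^+(a,c;z,s)\big|_{s=1}}{[z^n]\, G^+(a,c;z)}.
\end{equation*}

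First, I would derive a closed form for $F^+(a,c;z,s)$ by running the same functional-equation argument that produces $G^+_0$ and $G^+$, now decorating each step by the appropriate power of $s$. Splitting by the direction of the last step yields a linear system in the two generating functions $U^+(z,s)$ and $D^+(z,s)$ whose solution is an algebraic function of $z$ and $s$; the kernel (determinant) of the system is, up to simple factors, the same quantity $(1-z^2+c^2z^2)^2-4c^2z^2$ that governs $G^+$, and so factors as $(1-(c+1)z)(1-(c-1)z)(1+(c+1)z)(1+(c-1)z)$. In the regime $0<a<c+1$, the dominant singularity in $z$ is the square-root branch point at $z=z^+(a,c)=1/(c+1)$, with the other potential singularity (the Arrhenius factor $1-a-az^2+a^2z^2+ac^2z^2$ seen in $G^+$) lying strictly outside.

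Second, I would perform singularity analysis. Standard transfer theorems applied to the square-root singularity of $G^+(a,c;z)$ give $[z^n] G^+(a,c;z) \sim K_1 (c+1)^n n^{-1/2}$. Differentiating $F^+$ in $s$ and setting $s=1$ brings down a factor that worsens the singular exponent by one: one expects $\partial_s F^+(a,c;z,1) \sim K_2 \,(1-(c+1)z)^{-1}$ as $z\to 1/(c+1)^-$, and hence $[z^n]\partial_s F^+(a,c;z,1) \sim K_2 (c+1)^n$. Taking the ratio gives $\langle h\rangle_n \sim (K_2/K_1)\sqrt{n}$, and a careful computation of the leading coefficients pins down $K_2/K_1 = \sqrt{c\pi/2}$.

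The main obstacle is obtaining $F^+(a,c;z,s)$ in a form explicit enough to isolate the singular behaviour of $\partial_s F^+(a,c;z,1)$ and to extract the constants $K_1$ and $K_2$ exactly; the algebra is not hard in principle, but book-keeping the various subleading contributions at $z=1/(c+1)$ requires care. A useful probabilistic sanity check is that after rescaling, the endpoint of a random half-plane walk in the desorbed phase should converge to $\sqrt{c}\cdot m(1)$, where $m$ is a Brownian meander and the factor $\sqrt{c}$ reflects the effective per-step height variance induced by the stiffness weight $c$; since $m(1)$ is Rayleigh-distributed with density $x e^{-x^2/2}$ and mean $\int_0^\infty x^2 e^{-x^2/2}\,\mathrm{d}x = \sqrt{\pi/2}$, this gives $\langle h\rangle_n \sim \sqrt{c}\cdot\sqrt{\pi/2}\cdot\sqrt{n} = \sqrt{c\pi n/2}$, matching the claim.
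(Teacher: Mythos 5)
The paper states this lemma without proof, indicating only that it ``can be easily derived by incorporating a variable into the above generating functions which tracks the endpoint height''; your proposal carries out exactly that plan (a closed form for $F^+(a,c;z,s)$, then singularity analysis of $\partial_s F^+(a,c;z,s)\vert_{s=1}$ against $G^+(a,c;z)$ at the branch point $z=1/(c+1)$), with the ratio of singular amplitudes corroborated by your Brownian-meander/Rayleigh check giving $\sqrt{c\pi/2}$. This is the intended approach and your quantitative claims (square-root singularity of $G^+$, simple-pole-type behaviour of the $s$-derivative, effective diffusion constant $c$) are all consistent with the stated result.
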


\subsection{Solving the generating functions}\label{ssec:solving_gfs}

To find the dominant singularity and therefore obtain the free energy, we will construct a pair of functional equations satisfied by the generating functions $U_w(s)$ and $D_w(s)$, and solve these equations to obtain the explicit expression for one of them. It does not matter which generating function we solve, since by \cref{lem:fe_all_gfs} they all have the same dominant singularity. 

For brevity, we use the notation $D^{[h]}_w = [s^h]D_w(s)$ and $U^{[h]}_w = [s^h]U_w(s)$.

\begin{theorem}\label{thm:main_functional_equations}
The generating functions $U_w$ and $D_w$ satisfy the functional equations
\begin{align} 
    D_w(s) &=  1+z\ols\Big(cD_w(s)+U_w(s)\Big)+z(a-1)\Big(cD^{[1]}_w + U^{[1]}_w\Big)-z\ols c \Dzero, \label{eq:2} \\
    U_w(s) &= zs\Big(D_w(s)+cU_w(s)\Big)+zs^w(b-1)\Big(D^{[w-1]}_w+cU^{[w-1]}_w\Big)-zs^{w+1} c\Uw. \label{eq:1}
\end{align}
\end{theorem}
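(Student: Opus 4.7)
The plan is to derive both equations by a uniform last-step decomposition. Every non-empty directed walk in $\mathcal{W}_w$ has a unique final step, either $(1,-1)$ or $(1,+1)$, and removing it yields a shorter walk in $D_w \cup U_w$. This gives a bijection between non-empty $D_w$ walks (resp.\ $U_w$ walks) and pairs consisting of a $D_w \cup U_w$ walk together with a legal $(1,-1)$ step (resp.\ $(1,+1)$ step) to append. All that remains is to track how the weight of the shorter walk is updated when the new final step is added.

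For equation~\eqref{eq:2}, the empty walk supplies the constant $1$. Suppose $\phi$ ends at height $h$; appending a down step increases the length by $1$ (factor $z$) and decreases the final height by $1$ (factor $\ols$). A new stiffness site is created iff $\phi \in D_w$ (extra factor $c$), and a new bottom-wall contact is created iff the new endpoint is at height $0$, i.e.\ $h=1$ (extra factor $a$). Ignoring for the moment the boundary and wall corrections, summing over all $\phi$ gives the bulk contribution $z\ols(cD_w(s) + U_w(s))$. The wall correction upgrades the bulk factor $1$ to $a$ on exactly the walks with $h=1$, adding $z(a-1)(cD^{[1]}_w + U^{[1]}_w)$. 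Finally, the bulk term spuriously extends $D_w$ walks already at height $0$ below the strip, so we subtract $z\ols c D^{[0]}_w$; no analogous subtraction is needed on the $U_w$ side since $U^{[0]}_w=0$.

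Equation~\eqref{eq:1} is proved symmetrically by appending $(1,+1)$ steps. The length and height updates contribute $zs$; a stiffness site is created iff $\phi \in U_w$ (factor $c$); a top-wall contact (factor $b$) is created iff the new endpoint is at height $w$, i.e.\ $h=w-1$. This yields the bulk term $zs(D_w(s) + cU_w(s))$ and the wall correction $zs^w(b-1)(D^{[w-1]}_w + cU^{[w-1]}_w)$. The boundary subtraction removes the invalid extension of $U_w$ walks already at height $w$, contributing $-zs^{w+1}c U^{[w]}_w$; since $D^{[w]}_w=0$, no $D_w$-side subtraction is needed.

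The only delicate point in the bookkeeping is that the stiffness factor $c$ must be carried along together with the wall-contact correction: a walk that simultaneously creates a stiffness site and lands on a wall contributes the product $ca$ (or $cb$), not merely $a$ (or $b$). This is why $c$ sits inside the $(a-1)$ and $(b-1)$ brackets rather than appearing only in the bulk term. With this point noted, both identities are routine translations of the step-by-step construction into the generating-function variables.
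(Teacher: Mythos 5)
Your proposal is correct and follows essentially the same route as the paper: a last-step (append-a-step) construction with bulk terms $z\ols(cD_w+U_w)$ and $zs(D_w+cU_w)$, wall-contact corrections weighted by $(a-1)$ and $(b-1)$, and boundary subtractions at heights $0$ and $w$. Your explicit remark that the stiffness factor $c$ must ride inside the $(a-1)$ and $(b-1)$ brackets is a point the paper's sketch only implies, but it is the same argument.
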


\begin{proof}
Consider the first equation. Roughly speaking, every $\mathcal{U}$ walk is either another $\mathcal{U}$ walk with an up step appended to the end, or a $\mathcal{D}$ walk with an up step appended to the end. This leads to the terms $+zsD_w(s)$ and $+zscU_w(s)$, since appending an up step to a $\mathcal{U}$ walk leads to an extra stiffness point. See \cref{fig:construction}.

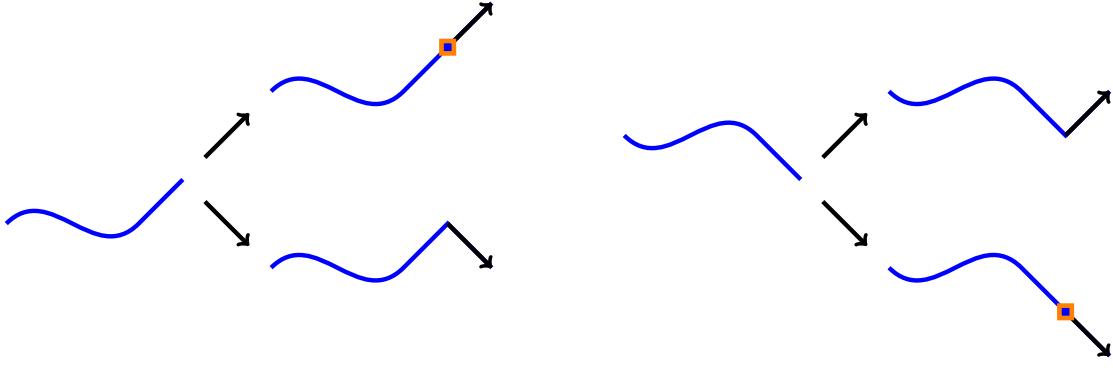
\begin{figure}
    \centering
    \resizebox{0.9\textwidth}{!}{
    \begin{tikzpicture}
    \tikzset{bc/.style = {circle, draw, line width=4pt, draw=red, fill=blue, inner sep=3pt}};
    \tikzset{tc/.style = {circle, draw, line width=4pt, draw=green, fill=blue, inner sep=3pt}};
    \tikzset{sp/.style = {draw, line width=3pt, draw=orange, fill=blue, inner sep=4pt}};
    \draw [blue, line width=0.1cm] (1,2) .. controls (2,3) and (3,1) .. (4,2) -- (5,3);
    
    \draw [black, line width=0.1cm, ->] (5.5,3.5) -- (6.5,4.5);
    \draw [blue, line width=0.1cm] (7,5) .. controls (8,6) and (9,4) .. (10,5) -- (11,6) -- (12,7);
    \draw [black, line width=0.1cm, ->] (11,6) -- (12,7); %
    \node [sp] at (11,6) {};
    
    \draw [black, line width=0.1cm, ->] (5.5,2.5) -- (6.5,1.5);
    \draw [blue, line width=0.1cm] (7,1) .. controls (8,2) and (9,0) .. (10,1) -- (11,2) -- (12,1);
    \draw [black, line width=0.1cm, ->] (11,2) -- (12,1); %
    
    \draw [blue, line width=0.1cm] (15,4) .. controls (16,3) and (17,5) .. (18,4) -- (19,3);
    
    \draw [black, line width=0.1cm, ->] (19.5,3.5) -- (20.5,4.5);
    \draw [blue, line width=0.1cm] (21,5) .. controls (22,4) and (23,6) .. (24,5) -- (25,4) -- (26,5);
    \draw [black, line width=0.1cm, ->] (25,4) -- (26,5); %
    
    \draw [black, line width=0.1cm, ->] (19.5,2.5) -- (20.5,1.5);
    \draw [blue, line width=0.1cm] (21,1) .. controls (22,0) and (23,2) .. (24,1) -- (25,0) -- (26,-1);
    \draw [black, line width=0.1cm, ->] (25,0) -- (26,-1); %
    \node [sp] at (25,0) {};
    
    \end{tikzpicture}
    }
    \caption{Illustrating part of \cref{thm:main_functional_equations}. In general an up step or a down step can be appended to an existing walk, but one must take into account the previous step in order to determine if a $c$ weight is also accrued.}
    \label{fig:construction}
    \end{figure}

Issues at the top of the strip lead to boundary terms. Firstly, if an up step is appended to a walk terminating on $y=w$, the new walk will leave the slit. This contribution must be cancelled out by $-zs^{w+1}c\Uw$. Secondly, appending an up step to a walk ending on $y=w-1$ will lead to an extra $b$ interaction. So $zs^wD^{[w-1]}_w$ must be replaced with $zs^wbD^{[w-1]}_w$, and likewise $zs^wc\Uw$ with $zs^wbcU^{[w-1]}_w$. The equation for $D_w(s)$ is constructed analogously -- the only difference is that the 0-length walk must be added as $+1$, since it is not constructed by appending a step to an existing walk.
\end{proof}

Additional relations can be obtained by taking the $s^w$ coefficient of~\eqref{eq:1} and the $s^0$ coefficient of~\eqref{eq:2}:
\begin{align}
    \Dzero &= 1+z\Big(cD^{[1]}_w+U^{[1]}_w\Big)+z(a-1)\Big(cD^{[1]}_w+U^{[1]}_w\Big),\\
    \Uw &=  z\Big(D^{[w-1]}_w+cU^{[w-1]}_w\Big)+z(b-1)\Big(D^{[w-1]}_w+cU^{[w-1]}_w\Big).
\end{align}
These allow us to eliminate $D^{[w-1]}_w+cU^{[w-1]}_w$ and $cD^{[1]}_w+U^{[1]}_w$ from the original equations, to get
\begin{align}
    D_w(s) &= \frac{1}{a}+z\ols\Big(cD_w(s)+U_w(s)\Big)+\left(1-\frac{1}{a}-cz\ols\right)\Dzero,\\
    U_w(s) &= zs\Big(D_w(s)+cU_w(s)\Big)+s^w\left(1-\frac{1}{b}-czs\right)\Uw.
\end{align}
Now eliminate one of $U_w(s)$ or $D_w(s)$, and isolate the \textit{kernel} on one side and boundary terms on the other side. We choose to eliminate $D_w$:
\begin{equation}\label{ker}
    \left(1-zsc-\frac{z^2s}{s-cz}\right)U_w(s) = \frac{zs^2}{a(s-cz)}+\frac{zs^2}{s-cz}\left(1-\frac{1}{a}-cz\ols \right)\Dzero + s^w\left(1-\frac{1}{b}-czs\right)\Uw.
\end{equation}
The kernel $K(c;z,s)=\left(1-czs-\frac{z^2s}{s-cz}\right)$ is quadratic in $s$, and has roots
\begin{equation}\label{s(z)}
s_{\pm}(c;z)=\dfrac{1-z^2+c^2z^2\pm\sqrt{(1-z^2+c^2z^2)^2 - 4c^2z^2}}{2cz}
\end{equation}
such that $K(c;z,s_\pm)=0.$ One of these has a power series expansion in $z$ and the other does not:
\begin{align}
    s_-(c;z) &= c z + c z^3 + (c + c^3) z^5 + \bigO(z^7) \\
    s_+(c;z) &= \frac{1}{cz} - \frac{z}{c} - cz^3 - (c+c^3)z^5 + \bigO(z^7).
\end{align}
However, note that since the highest power of $s$ in $U_w(s)$ is $s^w$, the substitution of either root $s=s_\pm$ into $U_w(s)$ leads to a well-defined power series in $z$.

The two roots have a symmetry
\begin{equation}
s_+=\frac{1}{s_-}.
\end{equation}
We thus assign $\hs=s_-$ and substitute $\hs$ and $1/\hs$ into \eqref{ker} to eliminate the left hand side and obtain a system of two equations with two unknowns ($\Dzero$ and $\Uw$). This can now be solved to obtain explicit expressions for $\Dzero$ and $\Uw$:
\begin{align} 
    \Dzero &= \frac{1}{B_w}\left(\hs^{2w}(\hs-cz)(1-b+bcz\hs)+\hs^2(cz\hs-1)((1-b)\hs+bcz)\right) \label{eqn:D0} \\
    \Uw &= \frac{1}{B_w}bc\hs^{w+1}(\hs^2-1)z^2 \label{eqn:Uw}
\end{align}
where
\begin{multline}\label{eqn:Bw_explicit}
    B_w \equiv B_w(a,b,c;z) = \hs^{2w}(\hs-cz)(1-a+acz\hs)(1-b+bcz\hs)\\ +\hs(cz\hs-1)((1-a)\hs+acz)((1-b)\hs+bcz).
\end{multline}

As a side note, eliminating $U_w(s)$ instead of $D_w(s)$, would have given an equation analogous to \eqref{ker}, with $\left(1-\frac{cz}{s}-\frac{z^2}{1-czs}\right)D_w(s)$ on the left hand side. 
This kernel is related to the previous one by $s\to \ols$, reflecting the up/down symmetry of the system, and proceeding with the solution yields the same result as \eqref{eqn:D0}.

\subsection{An alternative solution method}

Instead of using the kernel method to solve $\Dzero$ and $\Uw$, we can also obtain a recursive solution, generating walks in a strip of width $w+1$ by modifying walks in a strip of width $w$. The modification is as follows: take any walk in a strip of width $w$, and for each visit to the top surface, replace it with a (possibly empty) sequence of up-down pairs of steps. (If the last vertex was also a visit to the top surface, one can additionally append a final up step; for bridges, this is mandatory). See \cref{fig:recurrence_illustration}.

\begin{figure}
    \centering
    \resizebox{\textwidth}{!}{
    \begin{tikzpicture}
    \tikzset{bc/.style = {circle, draw, line width=4pt, draw=red, fill=blue, inner sep=3pt}};
    \tikzset{tc/.style = {circle, draw, line width=4pt, draw=green, fill=blue, inner sep=3pt}};
    \tikzset{sp/.style = {draw, line width=3pt, draw=orange, fill=blue, inner sep=4pt}};
    \draw [gray] (0,0) grid (14,4);
    \draw [gray, fill=gray] (0,0) rectangle (14,-0.2);
    \draw [gray, fill=gray] (0,4) rectangle (14,4.2);
    \draw [blue, line width=0.1cm] (0,0) -- (1,1) -- (2,0) -- (3,1) -- (4,2) -- (5,1) -- (6,2) -- (7,3) -- (8,4) -- (9,3) -- (10,4) -- (11,3) -- (12,2) -- (13,3) -- (14,4);
    \node [bc] at (2,0) {};
    \node [tc] at (8,4) {};
    \node [tc] at (10,4) {};
    \node [tc] at (14,4) {};
    \node [sp] at (3,1) {};
    \node [sp] at (6,2) {};
    \node [sp] at (7,3) {};
    \node [sp] at (11,3) {};
    \node [sp] at (13,3) {};
    
    \draw [line width=0.15cm, black, ->] (7,-0.5) -- (7,-1.5);
    
    \begin{scope}[yshift=-7cm, xshift=-5.5cm]
    \draw [gray] (0,0) grid (25,5);
    \draw [gray, fill=gray] (0,0) rectangle (25,-0.2);
    \draw [gray, fill=gray] (0,5) rectangle (25,5.2);
    \draw [blue, line width=0.1cm] (0,0) -- (1,1) -- (2,0) -- (3,1) -- (4,2) -- (5,1) -- (6,2) -- (7,3) -- (8,4);
    \draw [OliveGreen, line width=0.1cm] (8,4) -- (9,5) -- (10,4) -- (11,5) -- (12,4);
    \draw [blue, line width=0.1cm] (12,4) -- (13,3) -- (14,4);
    \draw [OliveGreen, line width=0.1cm] (14,4) -- (15,5) -- (16,4);
    \draw [blue, line width=0.1cm] (16,4) -- (17,3) -- (18,2) -- (19,3) -- (20,4);
    \draw [OliveGreen, line width=0.1cm] (20,4) -- (21,5) -- (22,4) -- (23,5) -- (24,4) -- (25,5);
    \node [bc] at (2,0) {};
    \node [tc] at (9,5) {};
    \node [tc] at (11,5) {};
    \node [tc] at (15,5) {};
    \node [tc] at (21,5) {};
    \node [tc] at (23,5) {};
    \node [tc] at (25,5) {};
    \node [sp] at (3,1) {};
    \node [sp] at (6,2) {};
    \node [sp] at (7,3) {};
    \node [sp] at (8,4) {};
    \node [sp] at (12,4) {};
    \node [sp] at (14,4) {};
    \node [sp] at (16,4) {};
    \node [sp] at (17,3) {};
    \node [sp] at (19,3) {};
    \node [sp] at (20,4) {};
    
    \end{scope}
    
    \end{tikzpicture}
    }
    \caption{Illustrating the recurrence~\eqref{eqn:Uw_recurrence} for $\Uw$. Given a bridge in a strip of width $w$, one can obtain a bridge in the strip of width $w+1$. Each contact with the top wall can be replaced by a `zigzag' path, ie.~a (possibly empty) sequence of up-down pairs of steps. A final up step must be appended at the end. The addition of each non-empty zigzag path generates two new stiffness sites, except the last (which only generates one).}
    \label{fig:recurrence_illustration}
\end{figure}
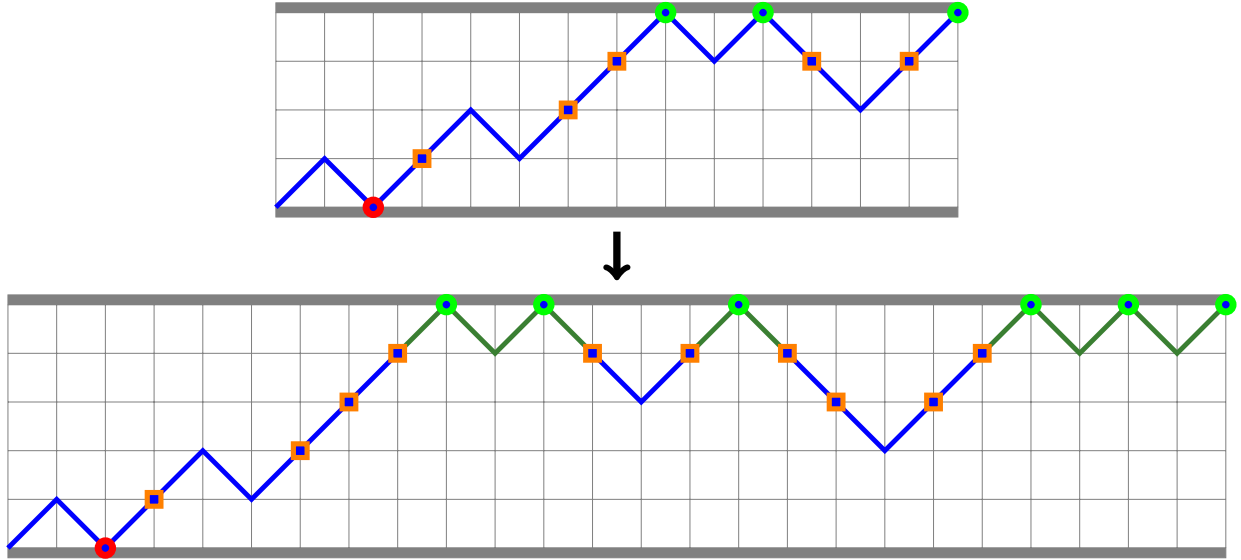

For loops, this leads to the recurrence
\begin{align}
    D_1^{[0]}(a,b,c;z) &= \frac{1}{1-z^2ab} \\
    D_{w+1}^{[0]}(a,b,c;z) &= \Dzero\left(a,1+\frac{z^2bc^2}{1-z^2b},c;z\right)
\end{align}
and for bridges
\begin{align}
    U_1^{[1]}(a,b,c;z) &= \frac{zb}{1-z^2ab} \\
    U_{w+1}^{[w+1]}(a,b,c;z) &= \Uw\left(a,1+\frac{z^2bc^2}{1-z^2b},c;z\right)\times\left(1+\frac{z^2bc^2}{1-z^2b}\right)^{-1}\times\frac{zbc}{1-z^2b} \label{eqn:Uw_recurrence}\\
    &= \frac{zbc}{1-z^2b(1-c^2)}\Uw\left(a,1+\frac{z^2bc^2}{1-z^2b},c;z\right).
\end{align}
A similar but slightly more complicated recurrence can also be found for the total generating function $G_w(a,b,c;z)$.

It follows that all the generating functions we have considered so far are rational. Moreover, by induction one finds that
\begin{align}
    \Dzero(a,b,c;z) &= \frac{P_w(0,b,c;z)}{P_w(a,b,c;z)} \\
    \Uw(a,b,c;z) &= \frac{bc^{w-1}z^w}{P_w(a,b,c;z)} \label{eqn:bridges_recursive_gf}
\end{align}
where the $P_w$ are polynomials satisfying the recurrence
\begin{equation}
    P_{w+1} = (1-z^2(1-c^2))P_w - z^2c^2P_{w-1}
\end{equation}
with $P_1(a,b,c;z) = 1-z^2ab$ and $P_2(a,b,c;z) = 1-z^2(a+b)+z^4ab(1-c^2)$.

Note that by~\eqref{eqn:bridges_recursive_gf} and \cref{lem:fe_all_gfs}, $z_w(a,b,c)$ is the dominant root of $P_w(a,b,c;z)$.

\section{The symmetric case: \texorpdfstring{$a=b$}{a=b}}\label{sec:symmetric}

Visualising the different regions in $(a,b,c)$-phase-space is challenging, so we first dedicate this section to the analysis of the $a=b$ case. Since $\Uw$ has a simpler form~\eqref{eqn:Uw} than $\Dzero$~\eqref{eqn:D0}, we focus on that generating function.

Setting $b\mapsto a$ takes \eqref{eqn:Uw} to
\begin{equation} \label{eqn:Uwsym}
        \Uw = \frac{ac\hs^{w+1}(\hs^2-1)z^2}{\hs^{2w}(\hs-cz)(1-a+acz\hs)^2+\hs(cz\hs-1)((1-a)\hs+acz)^2}.
\end{equation}

\begin{note}
The case $w=2$ is rather trivial and somewhat pathological, so for the remainder of this section we assume that $w\geq3$.
\end{note}

\subsection{The behaviour of \texorpdfstring{$\hs$}{s}}

In order to understand the singularity behaviour of $\Uw$ it will be useful to first briefly discuss $\hs$ (recall~\eqref{s(z)}). It has a power series expansion which is convergent for $|z| < \frac{1}{c+1}$. Since the coefficient of $z^n$ is a polynomial in $c$ with non-negative integer coefficients (this is easily derived from the fact that $K(c;z,\hs) = 0$), it is a strictly increasing function of $z$ on $(0,\frac{1}{c+1}$).

At $z=\frac{1}{c+1}$ we have $\hs = 1$. For $z > \frac{1}{c+1}$, the behaviour depends on $c$. 
\begin{itemize}
    \item If $c=1$ then $\hs$ is complex for all $z>\frac{1}{c+1} = \frac12$.
    \item If $0<c<1$ then $\hs$ is complex on $(\frac{1}{c+1},\frac{1}{1-c})$, equal to $-1$ at $z=\frac{1}{1-c}$, and then real for $z>\frac{1}{1-c}$.
    \item If $c>1$ then $\hs$ is complex on $(\frac{1}{c+1},\frac{1}{c-1})$, equal to $1$ at $z=\frac{1}{c-1}$, and then real for $z>\frac{1}{c-1}$.
\end{itemize}
When $\hs$ is complex, we have
\begin{align}
    \Re(\hs) &= \frac{1-z^2+c^2z^2}{2cz} & \Im(\hs) &= -\frac{\sqrt{4c^2z^2 - (1-z^2+c^2z^2)^2}}{2cz} \\
    & & &= -\sqrt{1-\Re(\hs)^2}
\end{align}
so that $\Re(\hs)^2 + \Im(\hs)^2 = 1$, that is, $\hs$ lies on the unit circle.

\subsection{Zero-force curve}\label{ssec:sym_zfc}

We are interested in understanding the behaviour of the dominant singularity $z_w(a,a,c)$ of $\Uw$ for all $a,c>0$. By \cref{lem:free_energy},~\eqref{eqn:fe(z)} and Pringsheim's theorem~\cite[Thm.~IV.6]{flajolet_analytic_2009}, $z_w(a,a,c)$ is finite, real and positive. Moreover, since $\Uw$ is rational (despite the closed form expression~\eqref{eqn:Uwsym} involving the algebraic function $\hat s$), all singularities are poles of integer order.

By definition the force $\mathcal{F}_w$ is 0 at any point in $a$-$c$ space where $\kappa_w$ (and hence $z_w$) does not depend on $w$. Examining the denominator of~\eqref{eqn:Uwsym}, there are only four ways this can happen: if $z_w$ solves $\hs=0$, $\hs=\pm1$, or
\begin{equation}\label{eqn:symm_zfc_eqn}
    (\hs-cz)(1-a+acz\hs)^2 = \hs(cz\hs-1)((1-a)\hs+acz)^2 = 0.
\end{equation}

There are no relevant solutions to $\hs=0$ or $\hs=-1$. However $\hs=1$ and~\eqref{eqn:symm_zfc_eqn} are both solved when $a=c+1$ and $z=\frac{1}{c+1}$. Note that $z=\frac{1}{c+1}$ is also a root of the numerator of~\eqref{eqn:Uwsym}. Indeed, $\hs^2-1$ has a root of order\footnote{By ``$f(z)$ has a root of order $k$ at $z=z_0$'', we mean $f(z) \sim C(z-z_0)^k$ as $z\to z_0$, for some finite, non-zero constant $C$.} $\frac12$ at $z=\frac{1}{c+1}$. 

In general the root $z=\frac{1}{c+1}$ of the denominator is also of order $\frac12$, leading to a removable singularity. There are two exceptions, however: 
\begin{itemize}
    \item the root is of order $\frac32$ when $a=c+1$, and
    \item the root is of order $\frac32$ when
    \begin{equation}
    a=\frac{(c+1)(w+c-1)}{w-c-1} \qquad\text{and}\qquad c<w-1.
\end{equation}
\end{itemize}
In the latter case $z=\frac{1}{c+1}$ is not the dominant singularity, and we will not discuss it further. In the former case $z=\frac{1}{c+1}$ is the dominant singularity, and we now explain why.

Let $a=c+1$ and suppose that $\Uw$ has a pole at some $z'\in(0,\frac{1}{c+1})$. Since $\hs$ is strictly increasing on this interval, it is invertible. We find its inverse by solving $K(c;z,s) = 0$ in $z$:
\begin{equation}\label{eqn:zpm}
    z_\pm = \frac{c+cs^2\pm\sqrt{4s^2+c^2(1-s^2)^2}}{2s(c^2-1)}.
\end{equation}
The inverse of $\hs$ on $(0,\frac{1}{c+1})$ is then $\hz = z_-$. Substituting $a=c+1$ and $z=\hz$ into~\eqref{eqn:Uwsym} gives
\begin{equation}\label{eqn:Uw_inverse_z}
    \frac{s^w\left(1+s^2+\sqrt{4s^2+c^2(1-s^2)^2}\right)}{c(s^2-1)(s^{2w}-1)}.
\end{equation}
The image of $\hs$ on $(0,\frac{1}{c+1})$ is $(0,1)$, but there are no poles of~\eqref{eqn:Uw_inverse_z} in this interval -- a contradiction. 

It follows that $z_w(c+1,c+1,c) = \frac{1}{c+1}$, or equivalently $\kappa_w(c+1,c+1,c) = \log(c+1)$. Then the force $\mathcal{F}_w(c+1,c+1,c) = 0$.
So $a=c+1$ is the zero-force curve for the symmetric model.

Observe that, as $c$ increases, the zero-force curve increases and thus so too does the region where the force is positive. This intuitively makes sense -- for large $c$, walks tend to prefer to have many long straight segments, and will hence more strongly prefer a wider strip. It follows that the value of $a$ required to induce a negative force must also increase.

\subsection{Off the zero-force curve}\label{ssec:sym_off_zfc}

Away from the zero-force curve $a=c+1$ we are in general unable to exactly compute the dominant singularity and the force. The exception is another curve which, like the zero-force curve, corresponds to a root of the numerator of~\eqref{eqn:Uwsym}. This is the root $z=\frac{1}{c-1}$, which is a simple pole and the dominant singularity if
\begin{equation}
    a=a^*_w=\frac{(c-1)(c-w+1)}{c+w-1} \qquad\text{and}\qquad c>w-1.
\end{equation}
See \cref{fig:symm_astar_curves}. Here we have $\kappa_w(a^*_w,a^*_w,c) = \log(c-1)$, and hence $\mathcal{F}_w(a^*_w,a^*_w,c) = 0$. However, because the curve $a^*_w$ depends on $w$ we do not consider this to be a true ``zero-force curve''.

\begin{figure}
    \centering
    \includegraphics[width=0.6\textwidth]{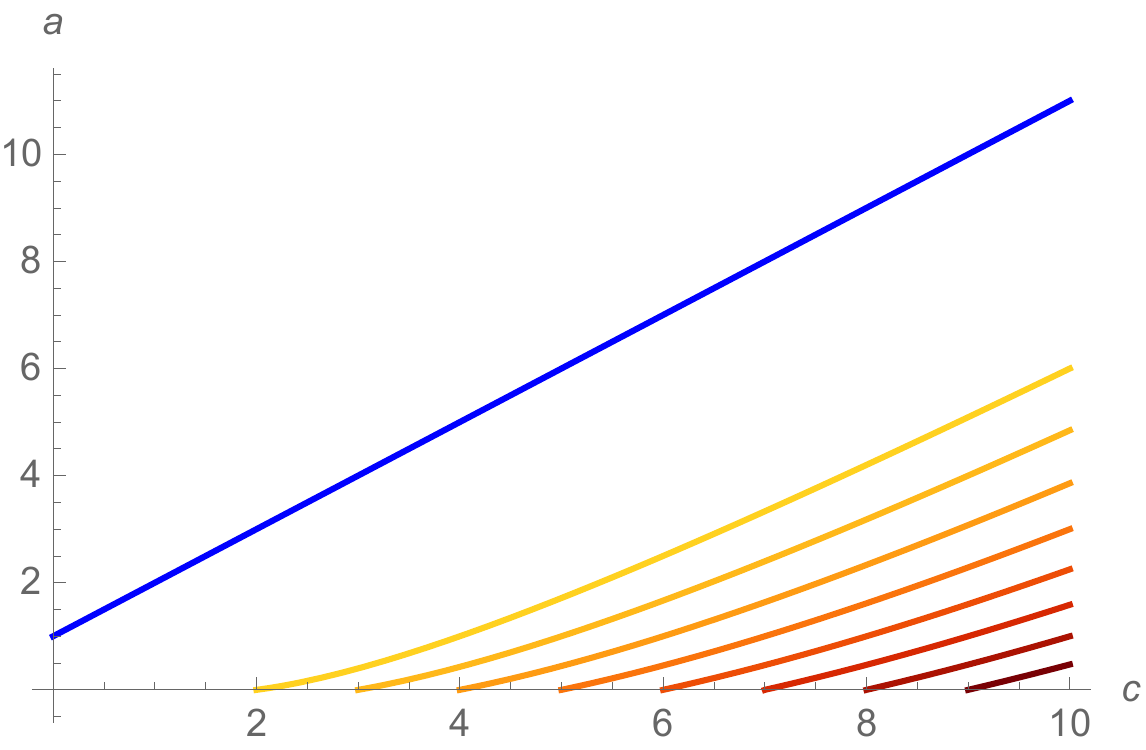}
    \caption{The zero-force curve for the symmetric model (blue), together with the curves $a^*_w$ for $w=3,\dots,10$ (darker colours correspond to larger $w$).}
    \label{fig:symm_astar_curves}
\end{figure}

For other $a,c>0$ we can compute asymptotic expressions for $z_w$ and $\mathcal{F}_w$. We begin by observing that, for fixed $c$, as $a$ decreases the dominant singularity $z_w(a,a,c)$ increases. 
\begin{itemize}
    \item For $a>c+1$ (ie.~the region above the zero-force curve), $z_w(a,a,c) < \frac{1}{c+1}$ and so $\hs$ is real.
    \item Below the zero-force curve, we consider separately the cases $c\leq w-1$ and $c>w-1$.
    \begin{itemize}
        \item[$\circ$] If $c\leq w-1$ then for $a<c+1$ we have $\frac{1}{c+1} < z_w(a,a,c) < \left|\frac{1}{c-1}\right|$, so that $\hs$ is complex and on the unit circle (this is true even for $c=1$).
        \item[$\circ$] If $c>w-1$ then for $a^*_w < a < c+1$ we have $\frac{1}{c+1} < z_w(a,a,c) < \frac{1}{c-1}$, so that $\hs$ is complex and on the unit circle. Then for $0 < a < a^*_w$ we have $z_w(a,a,c) > \frac{1}{c-1}$, so that $\hs$ is again real.
    \end{itemize}
\end{itemize} 

Another way of saying this is that for fixed $c \leq w-1$ there are two distinct regions for $a$: above the zero-force curve (where $\hs$ is real) and below (where $\hs$ is complex). Meanwhile for $c>w-1$ there are three regions for $a$: above the zero-force curve (where $\hs$ is real), between the zero-force curve and $a_w^*$ (where $\hs$ is complex), and below $a_w^*$ (where $\hs$ is again real). However, since
\begin{equation}
    \frac{(c-1)(c-w+1)}{c+w-1} \to 1-c \qquad \text{as}\qquad w\to\infty,
\end{equation}
for any fixed $c$ and $w$ sufficiently large (namely $w \geq c+1$), only the upper two regions exist. Since we will only be computing asymptotic expressions (in $w$), we can assume $w \geq c+1$.

Next, recall that $\hz = z_-$ from~\eqref{eqn:zpm} was the inverse of $\hs$ for $z\in(0,\frac{1}{c+1})$. This is helpful above the zero-force curve, but below things are a little more complicated. To find the boundary of the region where $z_-$ is the inverse of $\hs$, we need to find where the derivative $\frac{\partial}{\partial z}\hs$ is 0. This is solved by $z=\frac{1}{\sqrt{c^2-1}}$. So for given $c>0$, let $a^\dagger_w>0$ be the value of $a$ satisfying $z_w(a^\dagger_w,a^\dagger_w,c) = \frac{1}{\sqrt{c^2-1}}$ (if it exists).

We have been unable to find a simple expression for the curve $a^\dagger_w$ in the $a$-$c$ plane where $z_w(a,a,c)=\frac{1}{\sqrt{c^2-1}}$, however computation readily shows it to lie strictly between the zero-force curve $c+1$ and $a^*_w$. That is, it lies in the complex-$\hs$ region. See \cref{fig:symm_adagger_curves}.

\begin{figure}
    \centering
    \includegraphics[width=0.6\textwidth]{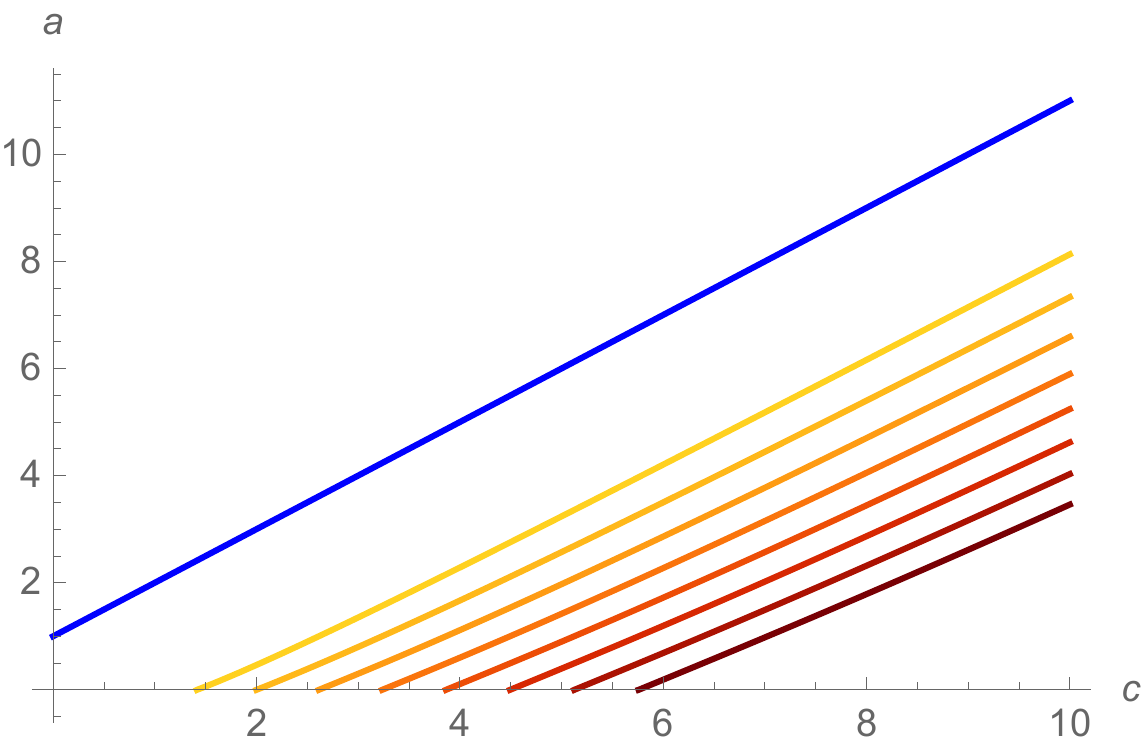}
    \caption{The zero-force curve for the symmetric model (blue), together with the curves $a^\dagger_w$ for $w=3,\dots,10$ (darker colours correspond to larger $w$).}
    \label{fig:symm_adagger_curves}
\end{figure}

Above the curve $a^\dagger_w$ the inverse of $\hs$ is $z_-$, while below $a^\dagger_w$ it is $z_+$. However, for given $c$ the position of $a^\dagger_w$ decreases with $w$ until it drops below 0 (this follows from \cref{thm:strip_to_halfplane}), so that for sufficiently large $w$, the only inverse of $\hs$ for all $a$ is $\hz=z_-$.

\subsubsection{Below the zero-force curve}\label{sssec:below_zfc}

Since we are computing asymptotic approximations, for fixed $c$ we may assume that $w$ is large enough so that $\hs$ is complex and has inverse $\hz$. Then as per previous work~\cite{wong_enumeration_2015} we obtain a good approximation in this region by guessing that $\hs$ is a perturbation of a $2w$-th root of unity
\begin{equation}\label{s_expansion_complex}
    \hs = \exp\left[ \frac{\pi i}{w}\left( c_0+\frac{c_1}{w}+\frac{c_2}{w^2}+\dots \right) \right].
\end{equation}

Take $z=\hz$ in the generating function \eqref{eqn:Uwsym}, and simplify to obtain the denominator. Then, setting the denominator equal to 0 and substituting in \eqref{s_expansion_complex} to solve for the coefficients yields
\begin{equation}
    \hs = \exp\left[\frac{\pi i}{w} \left(-1+\frac{(c+1)(a+c-1)}{(c-a+1)w}-\frac{(c+1)^2(a+c-1)^2}{(c-a+1)^2w^2}+\bigO\left(\frac{1}{w^3}\right) \right) \right].
\end{equation}

This corresponds to a dominant singularity
\begin{equation}
    z_w=\frac{1}{1+c}+\frac{\pi^2c}{2(c+1)w^2}-\frac{\pi^2c(a+c-1)}{(c-a+1)w^3}+\bigO\left(\frac{1}{w^4}\right).
\end{equation}
Using \eqref{eqn:fe(z)}, we find the free energy
\begin{equation}
    \kappa_w=\log(c+1)-\frac{\pi^2c}{2w^2}+\frac{\pi^2c(c+1)(a+c-1)}{(c-a+1)w^3}+\bigO\left(\frac{1}{w^4}\right).
\end{equation}
and the force exerted
\begin{equation}\label{eqn:symm_smalla_force}
    \mathcal{F}_w = \frac{\pi^2c}{w^3}-\frac{3\pi^2c(c+1)(a+c-1)}{(c-a+1)w^4}+\bigO\left(\frac{1}{w^5}\right).
\end{equation}

This is positive and decays as a power law in $w$, which corresponds to repulsive long-range force. 

By \cref{lem:endpoint_height}, the exponent $\nu$ we might expect to fit into~\eqref{eqn:daoud_degennes} is $\frac12$, and indeed the leading term in $\mathcal{F}_w$ matches this exactly.

\subsubsection{Above the zero-force curve}\label{sssec:symm_large_a}

We now turn to the case $a>c+1$. For a singularity, the denominator of \eqref{eqn:Uwsym} must vanish,
\begin{equation}\label{eqn:s_denom}
    \hs^{2w}(\hs-cz)(1-a+acz\hs)^2+\hs(cz\hs-1)((1-a)\hs+acz)^2=0.
\end{equation}
The second term is cancelled by $z=\frac{\sqrt{a-1}}{\sqrt{a(a+c^2-1)}}$, which corresponds to $\hs=\frac{\sqrt{a}c}{\sqrt{(a-1)(a+c^2-1)}}=\frac{acz}{a-1}$. Further, $\left|\hs\right|<1$ on this region, so $\hs^{2w}\to 0$ for $w\to\infty$. Hence this value of $\hs$ solves \eqref{eqn:s_denom} in the limit of large $w$. Writing $\Lambda =\frac{\sqrt{a}c}{\sqrt{(a-1)(a+c^2-1)}}$ and expanding about this value, the next term is exponential in $w$, and must have rate of decay equal to $\Lambda$. Substituting and solving for coefficients, one finds
\begin{equation}\label{s_expansion_real}
    \hs=\Lambda \left[1-\frac{(a-c-1)(a+c-1)(c^2+a^2-1)}{2ac(a-1)(a+c^2-1)}\Lambda^w  + \bigO(\Lambda^{2w}) \right]
\end{equation}

Mapping back to $z$, we find
\begin{equation}
    z_w = \frac{\sqrt{a-1}}{\sqrt{a(a+c^2-1)}}\left[1-\frac{\left(a^2-2 a-c^2+1\right)^2}{2ac (a-1) \left(a+c^2-1\right)}\Lambda^w + \bigO(\Lambda^{2w})\right]
\end{equation}
and hence
\begin{equation}
    \kappa_w = -\frac12\log\left(\frac{a-1}{a(a+c^2-1)}\right) + \frac{\left(a^2-2 a-c^2+1\right)^2}{2ac (a-1)\left(a+c^2-1\right)}\Lambda^w + \bigO(\Lambda^{2w})
\end{equation}
and
\begin{equation}
    \mathcal{F}_w = \frac{\left(a^2-2 a-c^2+1\right)^2\log\Lambda}{2ac (a-1)\left(a+c^2-1\right)}\Lambda^w + \bigO(\Lambda^{2w}).
\end{equation}
This is negative and decays exponentially, corresponding to a short-range attractive force.

Note that for all $a,c>0$, we have $z_w(a,c) \to z^+(a,c)$ as per~\eqref{eqn:halfplane_domsing}.

\section{The asymmetric case}\label{sec:asymmetric}

\subsection{Zero-force surface}

We now turn to general $(a,b,c)$-space. The zero-force `curve' is now really a zero-force `surface'.

For the zero-force surface we follow the same argument as the symmetric case, observing that for the force $\mathcal{F}_w$ to be 0, $z_w$ must solve $\hs=1$ or 
\begin{equation}\label{eqn:asymm_zfc_eqn}
    (\hs-cz)(1-a+acz\hs)(1-b+bcz\hs) = \hs(cz\hs-1)((1-a)\hs+acz)((1-b)\hs+bcz) = 0.
\end{equation}
Both $\hs=1$ and~\eqref{eqn:asymm_zfc_eqn} are solved when $z=\frac{1}{c+1}$ and $a=c+1$ or $b=c+1$, but this is a singularity of $\Uw$ only if we also have $a=b$. However there is another solution to~\eqref{eqn:asymm_zfc_eqn}: when $a,b>1$,
\begin{equation}\label{eqn:asymm_zfc_soln}
    ab-a-b-c^2+1=0 \qquad\text{and}\qquad z = z^* =\frac{\sqrt{a-1}}{\sqrt{a(a+c^2-1)}}.
\end{equation}
(By symmetry one can replace $a$ with $b$ in the equation for $z^*$.) Note that this reduces to known curves ($a=c+1$ and $ab-a-b=0$~\cite{brak_directed_2005}) in the $a=b$ and $c=1$ cases respectively.

Certainly~\eqref{eqn:asymm_zfc_soln} describes a singularity of $\Uw$; it remains to be shown that there is no singularity smaller than $z^*$ when $ab-a-b-c^2+1=0$. This can also be done in the same way as the symmetric case. First note that $z^* < \frac{1}{c+1}$ for $a>1$ and $c>0$, so that $\hs$ is invertible for $0<z\leq z^*$ with inverse $\hz$. Setting $b=\frac{a+c^2-1}{a-1}$ and $z=\hz$ in $\Uw$, the denominator factorises as $(s^{2w}-1)f(s;a,c)$, where
\begin{align}
    f(s;a,c) &= \alpha\sqrt{4s^2 + c^2\left(1-s^2\right)^2} + \beta \\
    \alpha &= \begin{multlined}[t] 2\left(a^2c^2s^4 + a^2c^2s^2 + a^2c^2 + a^2s^2 + ac^4s^4 + ac^4 - ac^2s^4 + 2ac^2s^2\right. \\ \left.- ac^2 - 2as^2 + c^4s^2 - 2c^2s^2 + s^2\right) 
    \end{multlined} \\
    \beta &= \begin{multlined}[t] - 2c\left(s^2 + 1\right)\left(a^2c^2s^4 - a^2c^2s^2 + a^2c^2 + 3a^2s^2 + ac^4s^4 - 2ac^4s^2 + ac^4 - ac^2s^4\right. \\ \left.+ 6ac^2s^2 - ac^2 - 4as^2 + c^4s^2 - 2c^2s^2 + s^2\right).
    \end{multlined}
\end{align}

The only root of $f$ in $(0,1)$ is
\begin{equation}
    s = \begin{cases} \frac{\sqrt{a}c}{\sqrt{(a-1)(a+c^2-1)}} & \text{if } a\geq c+1 \\ 
    \frac{\sqrt{(a-1)(a+c^2-1)}}{\sqrt{a}c} & \text{if } a<c+1 \end{cases}
\end{equation}
which, upon substitution back into $\hz$, exactly corresponds to $z^*$. 

It follows that $ab-a-b-c^2+1=0$ is the zero-force surface for the full asymmetric model. Along this surface $z_w=z^*$, and thus
\begin{equation}
    \kappa_w = \frac12\left(\log a + \log(a+c^2-1) - \log(a-1)\right)
\end{equation}
and $\mathcal{F}_w = 0$. See \cref{fig:asymm_zfs}.

\begin{figure}
    \centering
    \begin{subfigure}{0.6\textwidth}
    \includegraphics[width=\textwidth]{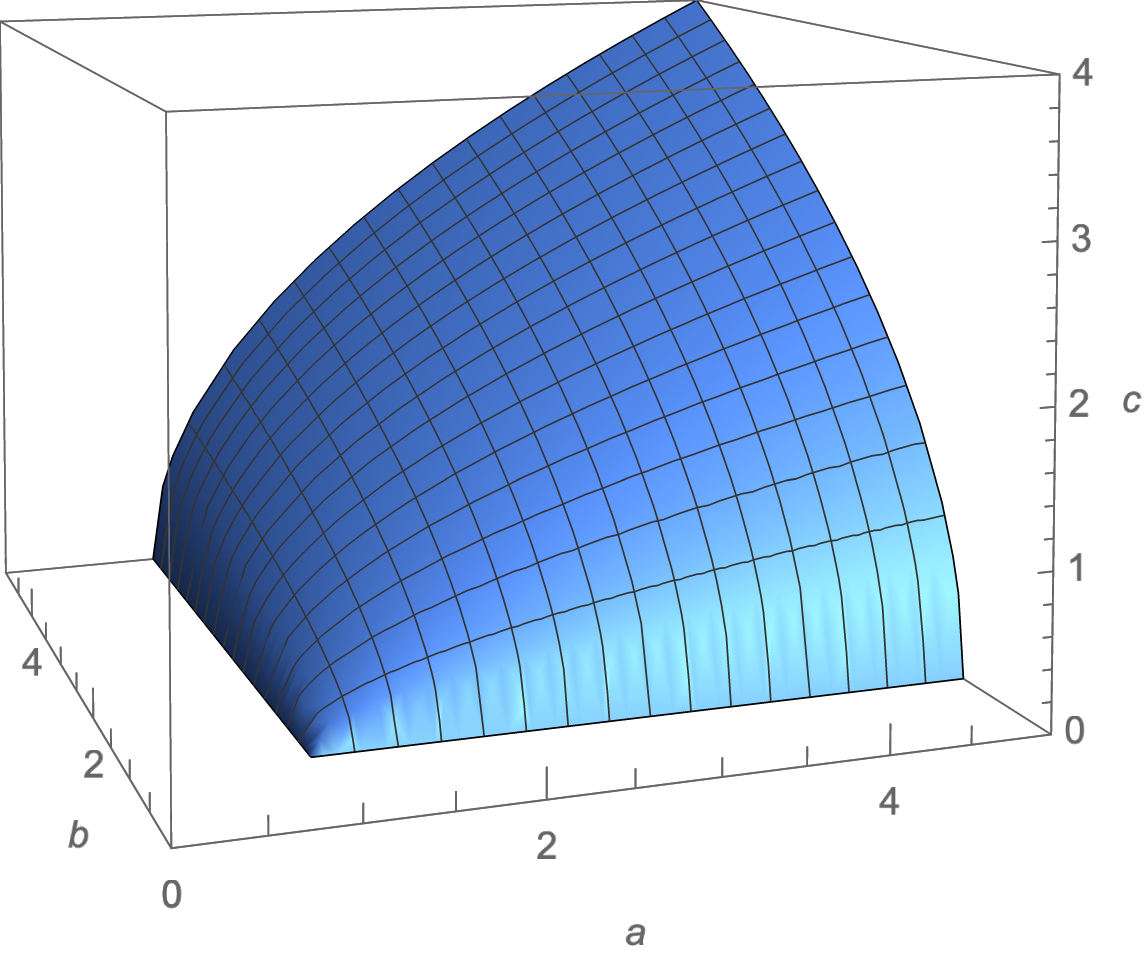}
    \end{subfigure}
    \caption{The zero-force surface in $(a,b,c)$-space.}
    \label{fig:asymm_zfs}
\end{figure}

As we saw in the symmetric case, as $c$ increases the values of $a$ and $b$ required to induce a negative force must also increase.

\subsection{Off the zero-force surface}\label{ssec:asymm_off_zfc}

Away from the zero-force surface, the picture in the asymmetric case is unsurprisingly more complicated than when $a=b$. In the symmetric case, the curve $a=c+1$ was not only the zero-force curve (separating the positive and negative force regions), it also separated the regions where the force was long-range and short-range. In full $(a,b,c)$ space, however, the long-range repulsive and short-range attractive regions only touch along the line $a=b=c+1$, and there are additionally two other regions where the force is short-ranged but repulsive.

As with the symmetric case, the region where the force is long-range corresponds to $\hs(z_w)$ lying on the unit circle, and hence $z_w \in [\frac{1}{c+1},\frac{1}{c-1}]$ (or just $z_w\geq \frac{1}{c+1}$ if $c<1$). Upon substitution we find that the $z_w=\frac{1}{c+1}$ if
\begin{equation}
    b = b^\ddagger_w = \frac{(c+1) (a-aw+(c+1) (c+w-1))}{a (c-w+1)+(c+1) (w-1)} \qquad\text{and}\qquad a<\frac{(c+1) (w-1)}{w-c-1}.
\end{equation}

\begin{figure}
    \centering
    \begin{subfigure}{0.49\textwidth}
    \includegraphics[width=0.8\textwidth]{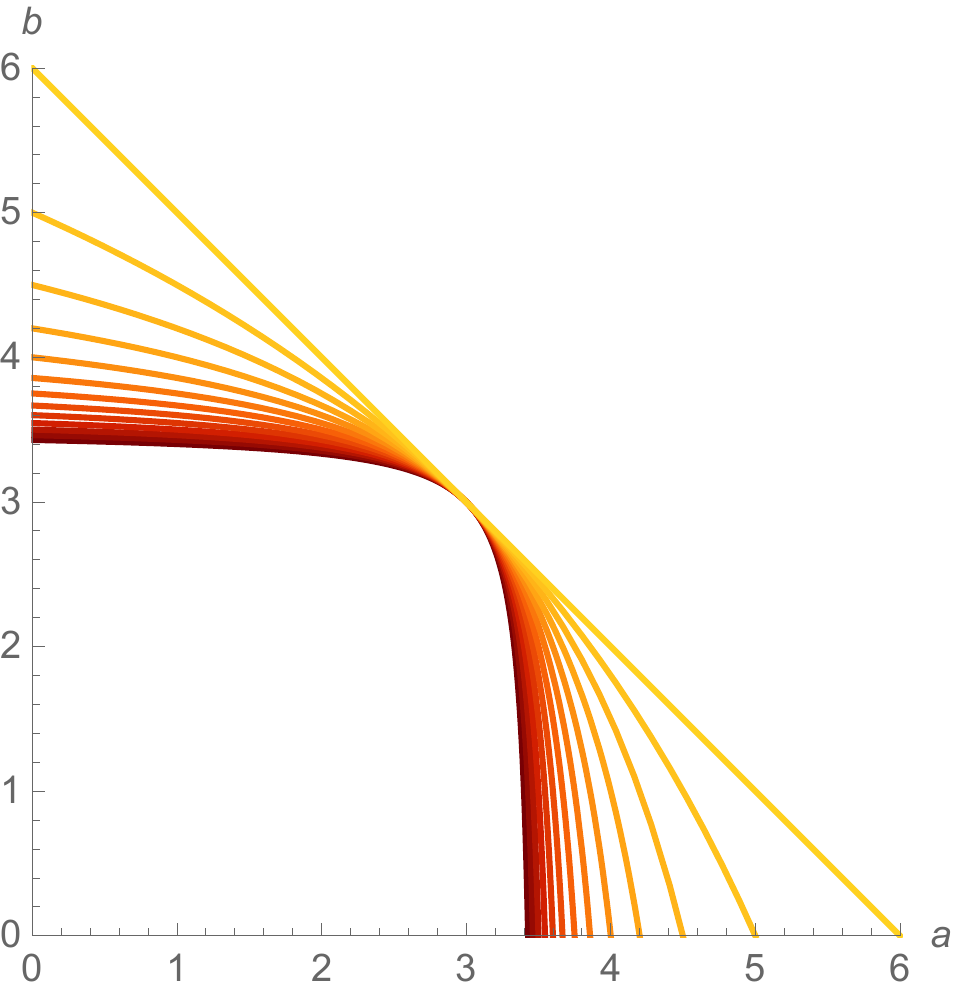}
    \end{subfigure}
    \hfill
    \begin{subfigure}{0.49\textwidth}
    \includegraphics[width=\textwidth]{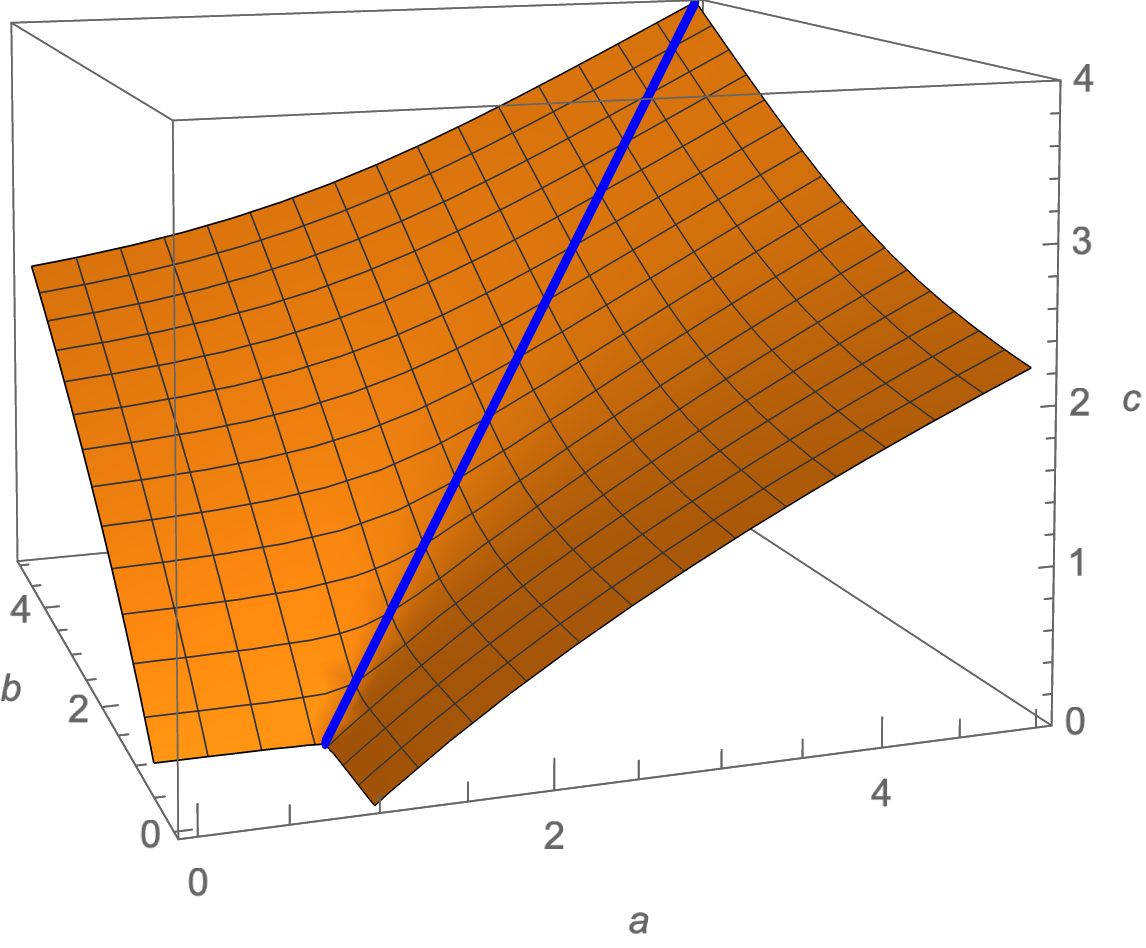}
    \end{subfigure}
    \caption{Two views of $b^\ddagger_w$. \textbf{Left:} Slices of $b^\ddagger_w$ for fixed $c=2$ and $w=3,\dots,15$ (darker colours correspond to larger $w$). \textbf{Right:} The surface $b^\ddagger_5$ in $(a,b,c)$-space. The blue line is $a=b=c+1$, ie.~where the surface touches the zero-force surface.}
    \label{fig:asymm_bddagger}
\end{figure}

See \cref{fig:asymm_bddagger}. As $w\to\infty$ this surface becomes piecewise planar, comprised of $b=c+1$ for $0<a\leq c+1$ and $a=c+1$ for $0<b\leq c+1$. Outside of these planes (that is, if $a>c+1$ or $b>c+1$), the force is short-ranged.

For smaller $a$ or $b$, the analogue of the curve $a^*_w$ is the surface where $z_w = \frac{1}{c-1}$. This occurs if
\begin{equation}
    b=b^*_w = \frac{(c-1) \left(a-a w+c^2-(c-1) w-1\right)}{(c-1) (a+w-1)+a w} \qquad\text{and}\qquad c>w-1.
\end{equation}
Inside of the surface $b^*_w$, we find that $\hs$ is real, while between $b^*_w$ and $b^\ddagger_w$ it is complex and on the unit circle. However, just as we had for the $a=b$ case, for fixed $c$ the surface $b^*_w$ disappears for $w$ sufficiently large, leaving only complex $\hs$ (and thus a short-range force) on the inside of $b^\ddagger_w$ and real $\hs$ (long-range force) on the outside.

Finally, we must consider the equivalent of $a^\dagger_w$, which informed us in the $a=b$ case how to invert $\hs$. Here this is the surface $b^\dagger_w$ defined by $z_w(a,b^\dagger_w,c) = \frac{1}{\sqrt{c^2-1}}$. Analogously to the $a=b$ case, the surface $b^\dagger_w$ lies strictly between $b^*_w$ and $b^\ddagger_w$ (ie.~in the complex $\hs$ region); moreover, it also vanishes for fixed $c$ and $w$ sufficiently large. 
So, for fixed $c$ and $w$ sufficiently large, $\hz=z_-$ is the only inverse of $\hs$ for all $a,b>0$.

\subsubsection{\texorpdfstring{$a,b<c+1$}{a,b<c+1}}\label{sssec:asymm_small_ab}
There are now more cases to consider than before. We begin with the case that both $a$ and $b$ are smaller than $c+1$. Using the same approach as \cref{sssec:below_zfc}, we find
\begin{multline}
    \hs = \exp\left[\frac{\pi i}{w}\left(-1-\frac{(c+1) \left(a b-a-b-c^2+1\right)}{(c-a+1) (c-b+1)w}\right.\right. \\ \left.\left.-\frac{(c+1)^2 \left(a b-a-b-c^2+1\right)^2}{(c-a+1)^2 (c-b+1)^2w^2} + \bigO\left(\frac{1}{w^3}\right)\right)\right].
\end{multline}
It follows that
\begin{equation}
    z_w = \frac{1}{c+1} +\frac{\pi ^2 c}{2(c+1) w^2} + \frac{\pi ^2 c \left(a b-a-b-c^2+1\right)}{ (c-a+1) (c-b+1)w^3 } + \bigO\left(\frac{1}{w^4}\right),
\end{equation}
\begin{equation}
    \kappa_w = \log(c+1) -\frac{\pi ^2 c}{2 w^2} -\frac{\pi ^2 c (c+1) \left(a b-a-b-c^2+1\right)}{ (c-a+1) (c-b+1)w^3} + \bigO\left(\frac{1}{w^4}\right),
\end{equation}
and
\begin{equation}
    \mathcal{F}_w = \frac{\pi^2c}{w^3} + \frac{3 \pi ^2 c (c+1) \left(a b-a-b-c^2+1\right)}{(c-a+1) (c-b+1)w^4} + \bigO\left(\frac{1}{w^5}\right).
\end{equation}
In this region there is thus a long-range repulsive force. Note that the leading term in $\mathcal{F}_w$ is the same as the symmetric case~\eqref{eqn:symm_smalla_force}.

\subsubsection{\texorpdfstring{$a=c+1$}{a=c+1} and \texorpdfstring{$b<c+1$}{b<c+1}}\label{sssec:asymm_small_ab_boundary}

On the boundary between the long-range and short-range regions, a slightly different asymptotic form holds. 
\begin{equation}\label{eqn:asymm_straight_bdy_right_s}
    \hs = \exp\left[\frac{\pi i}{2w}\left(-1 +\frac{(c+1) (c+b-1)}{2 (c-b+1)w} -\frac{(c+1)^2 (c+b-1)^2}{4 (c-b+1)^2w^2} + \bigO\left(\frac{1}{w^3}\right)\right)\right]
\end{equation}
\begin{equation}
    z_w = \frac{1}{c+1} + \frac{\pi^2c}{8(c+1)w^2} - \frac{\pi^2c(c+b-1)}{8(c-b+1)w^3} + \bigO\left(\frac{1}{w^4}\right)
\end{equation}
\begin{equation}
    \kappa_w = \log(c+1) - \frac{\pi^2c}{8w^2} + \frac{\pi^2c(c+1)(c+b-1)}{8(c-b+1)w^3} + \bigO\left(\frac{1}{w^4}\right)
\end{equation}
\begin{equation}\label{eqn:asymm_straight_bdy_right_F}
    \mathcal{F}_w = \frac{\pi^2c}{4w^3} - \frac{3\pi^2c(c+1)(c+b-1)}{8(c-b+1)w^4} + \bigO\left(\frac{1}{w^5}\right)
\end{equation}
The force is thus still long-range and repulsive.

For $b=c+1$ and $a<c+1$, simply switch $a$ and $b$ in~\eqref{eqn:asymm_straight_bdy_right_s}--\eqref{eqn:asymm_straight_bdy_right_F}.

\subsubsection{\texorpdfstring{$a>c+1$}{a>c+1} and \texorpdfstring{$a > b$}{a>b}}\label{sssec:asymm_large_ab}

Finally we turn to the short-range region. We again take the same approach as in the symmetric case. Recall the denominator $B_w$ of $\Uw$ from~\eqref{eqn:Bw_explicit}.
Solving $\hs(cz\hs-1)((1-a)\hs+acz)((1-b)\hs+bcz) = 0$ gives two solutions:
\begin{equation}\label{eqn:asymm_largeab_sroots}
    \hs = \frac{\sqrt{a}c}{\sqrt{(a-1)(a+c^2-1)}} \qquad \text{and}\qquad \hs = \frac{\sqrt{b}c}{\sqrt{(b-1)(b+c^2-1)}}.
\end{equation}

First consider the $a$-dependent solution, and set $\Lambda = \frac{\sqrt{a}c}{\sqrt{(a-1)(a+c^2-1)}}$. Taking the solution to $B_w = 0$ as an expansion about $\hs = \Lambda$, we find that the next term is exponential with rate of decay $\Lambda^2$ (not $\Lambda$, as it was for the symmetric case). Substituting and solving for the coefficients, we find
\begin{equation}\label{eqn:asymm_shortrange_s}
    \hs = \Lambda\left(1 - \frac{((a-1)^2-c^2)(a^2+c^2-1)(ab-a-b-c^2+1)}{2ac^2(a-1)(a-b)(a+c^2-1)}\Lambda^{2w} + \bigO(\Lambda^{4w})\right).
\end{equation}
with corresponding value of $z$
\begin{equation}\label{eqn:asymm_shortrange_z}
    z = \frac{\sqrt{a-1}}{\sqrt{a(a+c^2-1)}}\left(1-\frac{((a-1)^2-c^2)^2(ab-a-b-c^2+1)}{2ac^2(a-1)(a-b)(a+c^2-1)}\Lambda^{2w} + \bigO(\Lambda^{4w})\right).
\end{equation}

By symmetry, had we taken the second solution in~\eqref{eqn:asymm_largeab_sroots}, the corresponding expansions for $\hs$ and $z$ could be found by swapping $a$ and $b$ in~\eqref{eqn:asymm_shortrange_s} and~\eqref{eqn:asymm_shortrange_z}. Now
\begin{equation}
    \frac{\sqrt{a-1}}{\sqrt{a(a+c^2-1)}} < \frac{\sqrt{b-1}}{\sqrt{b(b+c^2-1)}} \iff a>b,
\end{equation}
so when $a>b$ the dominant singularity $z_w$ is given by~\eqref{eqn:asymm_shortrange_z}. Then
\begin{equation}
    \kappa_w = -\frac12\log\left(\frac{a-1}{a(a+c^2-1)}\right) + \frac{((a-1)^2-c^2)^2(ab-a-b-c^2+1)}{2ac^2(a-1)(a-b)(a+c^2-1)}\Lambda^{2w} + \bigO(\Lambda^{4w})
\end{equation}
and
\begin{equation}\label{eqn:asymm_shortrange_F}
    \mathcal{F}_w = \frac{((a-1)^2-c^2)^2(ab-a-b-c^2+1)\log\Lambda}{ac^2(a-1)(a-b)(a+c^2-1)}\Lambda^{2w} + \bigO(\Lambda^{4w})
\end{equation}
The force is thus short-range in this region. Note that $\mathcal{F}_w$ is positive for $b<\frac{a+c^2-1}{a-1}$ (that is, `inside' the zero-force surface) and negative if $b>\frac{a+c^2-1}{a-1}$.

For the reflected region with $b>c+1$ and $b>a$, simply swap $a$ and $b$ in~\eqref{eqn:asymm_shortrange_s}--\eqref{eqn:asymm_shortrange_F}.

Note that for all $a,b,c>0$, we have $z_w(a,b,c) \to \min\{z^+(a,c),z^+(b,c)\}$ as per~\eqref{eqn:halfplane_domsing}, and hence $\kappa_w(a,b,c) \to \max\{\kappa^+(a,c),\kappa^+(b,c)\}$. See \cref{fig:asymm_zfc_regions_slice} for an illustration of the different regions when $c=2$.

\begin{figure}
    \centering
    \begin{subfigure}{0.49\textwidth}
    \begin{tikzpicture}
    \node at (0,0) [anchor=south west] {\includegraphics[width=0.8\textwidth]{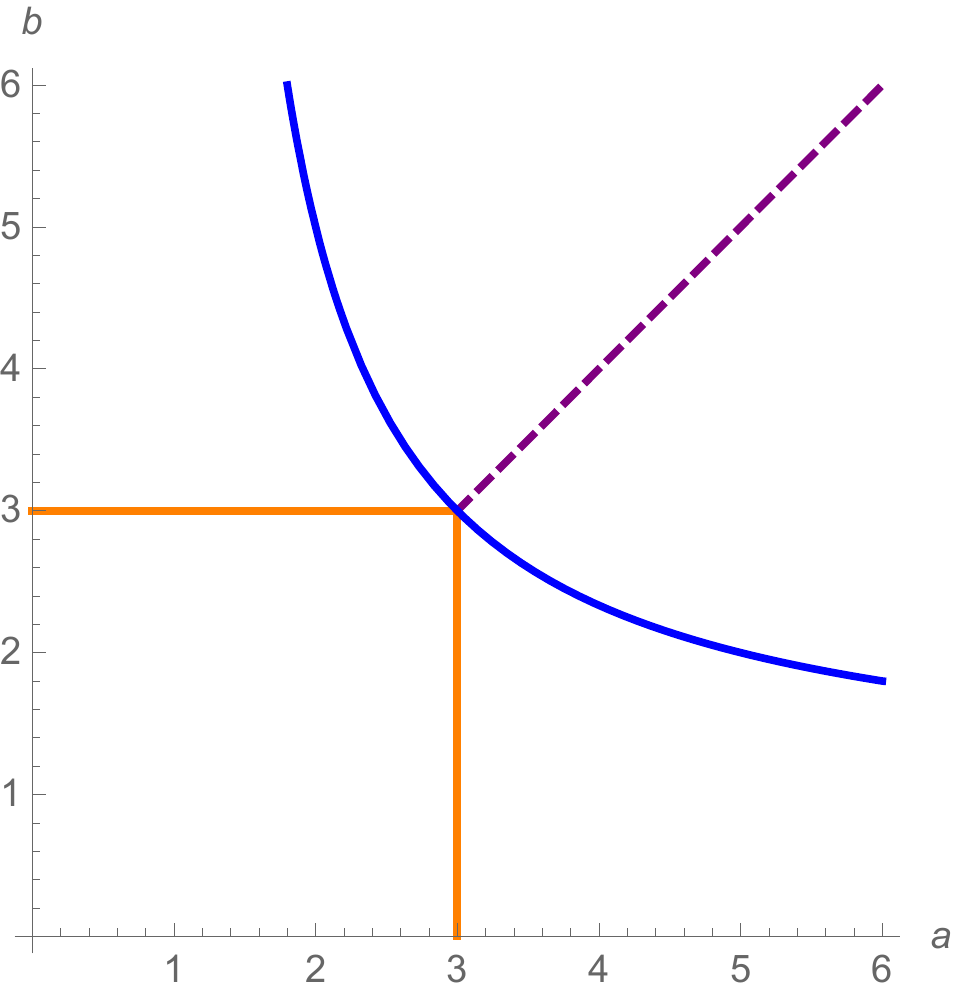}};
    \node at (1.7,1.8) {I};
    \node at (3,2) {II};
    \node at (1.9,3.2) {II};
    \node at (4.7,1.4) {III};
    \node at (1.3,4.7) {III};
    \node at (4.9,2.9) {IV};
    \node at (5.1,4) {V};
    \node at (3.9,5.1) {V};
    \node at (5.3,5) {VI};
    \end{tikzpicture}
    \end{subfigure}
    \caption{The different regions in a plane for fixed $c$ (in this case, $c=2$). The zero-force surface intersects the plane along the blue curve (IV). In region I and along its boundary (II) the force is positive and long-ranged (see \cref{sssec:asymm_small_ab} and \cref{sssec:asymm_small_ab_boundary} respectively). In region III the force is positive and short-ranged, while in region V it is negative and short-ranged (\cref{sssec:asymm_large_ab}). Along the line (VI) it is also negative and short-ranged, but a slightly different asymptotic form holds (\cref{sssec:symm_large_a}).} 
    \label{fig:asymm_zfc_regions_slice}
\end{figure}

\section{Sampling}\label{sec:sampling}

The \emph{Boltzmann distribution} assigns probability
\begin{equation}
    \mathbb{P}(\phi) = \frac{a^{m_a(\phi)}b^{m_b(\phi)}c^{m_c(\phi)}}{Z_{w,n}(a,b,c)}
\end{equation}
to a walk $\phi$. There are multiple ways to sample directly from this distribution, and even more if one is satisfied with only approximating it. One direct method involves computing the dominant eigenvalue and corresponding eigenvector of the transfer matrix~\cite{alm_random_1990}, while \emph{Boltzmann sampling} \cite{duchon_boltzmann_2004} can be used to generate objects of random size (but with correct relative probabilities within a given size).

We have implemented another method, known as the \emph{generating tree} method\footnote{Despite the fact that our underlying graph structure is not a tree.} \cite{NijenhuisWilf}. To sample objects of size $n$, one computes a labelled graph $G$ with $n+1$ levels, along with a weight function $F : V(G) \to \mathbb{R}$. The graph $G$ is essentially a graphical representation of the powers of the transfer matrix -- a node at level $m<n$ with a given label corresponds to a set of walks of length $m$, which can all be extended (by the addition of a step) in the same way, and accrue the same weight with each extension. Each different extension then corresponds to a different `child' at level $m+1$ (but multiple nodes at level $m$ could share the same child at level $m+1$). For a node $v$ with label $\ell$, the function $F(v)$ is then the sum of the total weights of all possible ``completions'' from walks with label $\ell$.

In our case, each node gets label $(h,p)$ where $h$ is an integer between 0 and $w$ (corresponding to the endpoint height of a walk) and $p$ is one of $\{\text{DD},\text{DU},\text{UD},\text{UU}\}$ (corresponding to the directions of the last two steps of a walk). We assign labels $(0,--)$ and $(1,-\text{U})$ to the nodes at level 0 and 1 respectively. See \cref{fig:generating_tree} for an illustration of $G$ when $w=2$ and $n=5$.

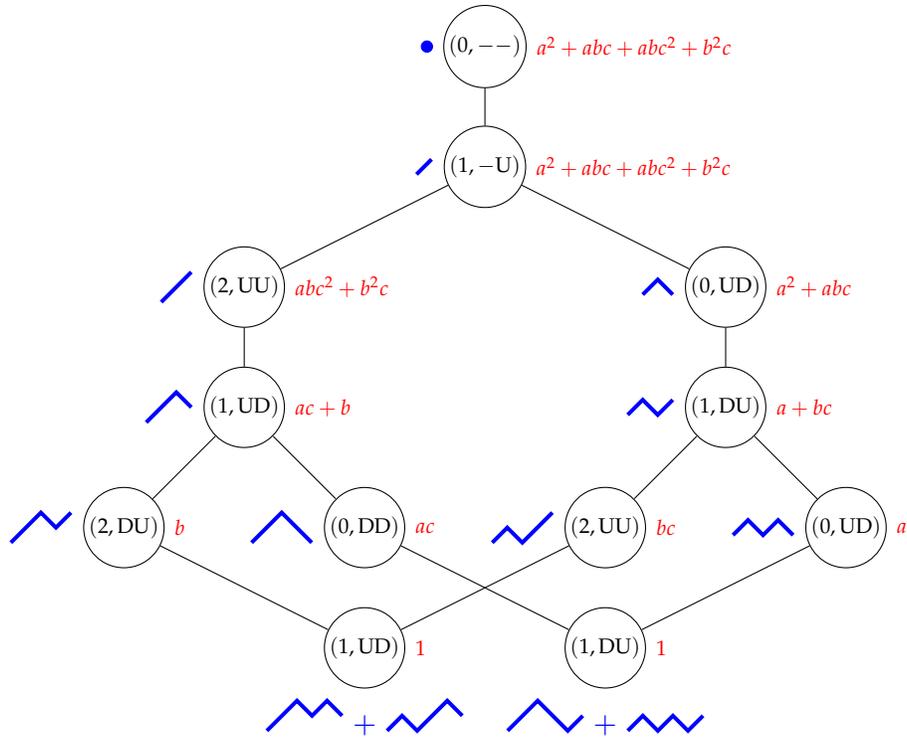
\begin{figure}[ht]
    \centering
    \begin{tikzpicture}[scale=0.8]
    \node (A) [draw, circle, inner sep=1pt, label={[red] right:{\scriptsize $a^2+abc+abc^2+b^2c$}}, label={left:\tikz{\node [circle, draw=blue, fill=blue, inner sep=1.5pt] at (0,0) {};}}] at (0,0) {\scriptsize $(0,--)$};
    \node (B) [draw, circle, inner sep=1pt, label={[red] right:{\scriptsize $a^2+abc+abc^2+b^2c$}}, label={left:\tikz{\draw [line width=1.5pt, blue] (0,0) -- (0.2,0.2);}}] at (0,-2) {\scriptsize $(1,-\text{U})$};
    \node (C) [draw, circle, inner sep=1pt, label={[red] right:{\scriptsize $abc^2+b^2c$}}, label={left:\tikz{\draw [line width=1.5pt, blue] (0,0) -- (0.2,0.2) -- (0.4,0.4);}}] at (-4,-4) {\scriptsize $(2,\text{UU})$};
    \node (D) [draw, circle, inner sep=1pt, label={[red] right:{\scriptsize $a^2+abc$}}, label={left:\tikz{\draw [line width=1.5pt, blue] (0,0) -- (0.2,0.2) -- (0.4,0);}}] at (4,-4) {\scriptsize $(0,\text{UD})$};
    \node (E) [draw, circle, inner sep=1pt, label={[red] right:{\scriptsize $ac+b$}}, label={left:\tikz{\draw [line width=1.5pt, blue] (0,0) -- (0.2,0.2) -- (0.4,0.4) -- (0.6,0.2);}}] at (-4,-6) {\scriptsize $(1,\text{UD})$};
    \node (F) [draw, circle, inner sep=1pt, label={[red] right:{\scriptsize $a+bc$}}, label={left:\tikz{\draw [line width=1.5pt, blue] (0,0) -- (0.2,0.2) -- (0.4,0) -- (0.6,0.2);}}] at (4,-6) {\scriptsize $(1,\text{DU})$};
    \node (G) [draw, circle, inner sep=1pt, label={[red] right:{\scriptsize $b$}}, label={left:\tikz{\draw [line width=1.5pt, blue] (0,0) -- (0.2,0.2) -- (0.4,0.4) -- (0.6,0.2) -- (0.8,0.4);}}] at (-6,-8) {\scriptsize $(2,\text{DU})$};
    \node (H) [draw, circle, inner sep=1pt, label={[red] right:{\scriptsize $ac$}}, label={left:\tikz{\draw [line width=1.5pt, blue] (0,0) -- (0.2,0.2) -- (0.4,0.4) -- (0.6,0.2) -- (0.8,0);}}] at (-2,-8) {\scriptsize $(0,\text{DD})$};
    \node (I) [draw, circle, inner sep=1pt, label={[red] right:{\scriptsize $bc$}}, label={left:\tikz{\draw [line width=1.5pt, blue] (0,0) -- (0.2,0.2) -- (0.4,0) -- (0.6,0.2) -- (0.8,0.4);}}] at (2,-8) {\scriptsize $(2,\text{UU})$};
    \node (J) [draw, circle, inner sep=1pt, label={[red] right:{\scriptsize $a$}}, label={left:\tikz{\draw [line width=1.5pt, blue] (0,0) -- (0.2,0.2) -- (0.4,0) -- (0.6,0.2) -- (0.8,0);}}] at (6,-8) {\scriptsize $(0,\text{UD})$};
    \node (K) [draw, circle, inner sep=1pt, label={[red] right:{\scriptsize $1$}}, label={below:\tikz{\draw [line width=1.5pt, blue] (0,0) -- (0.2,0.2) -- (0.4,0.4) -- (0.6,0.2) -- (0.8,0.4) -- (1,0.2);} {\color{blue} $+$} \tikz{\draw [line width=1.5pt, blue] (0,0) -- (0.2,0.2) -- (0.4,0) -- (0.6,0.2) -- (0.8,0.4) -- (1,0.2);}}] at (-2,-10) {\scriptsize $(1,\text{UD})$};
    \node (L) [draw, circle, inner sep=1pt, label={[red] right:{\scriptsize $1$}}, label={below:\tikz{\draw [line width=1.5pt, blue] (0,0) -- (0.2,0.2) -- (0.4,0.4) -- (0.6,0.2) -- (0.8,0) -- (1,0.2);} {\color{blue} $+$} \tikz{\draw [line width=1.5pt, blue] (0,0) -- (0.2,0.2) -- (0.4,0) -- (0.6,0.2) -- (0.8,0) -- (1,0.2);}}] at (2,-10) {\scriptsize $(1,\text{DU})$};
    \draw (A) -- (B) -- (C) -- (E) -- (G) -- (K);
    \draw (B) -- (D) -- (F) -- (I) -- (K);
    \draw (F) -- (J) -- (L) -- (H) -- (E);
    \end{tikzpicture}
    \caption{The graph $G$ for $w=2$ and $n=5$. For each node $v$ the function $F(v)$ is given in red, while the corresponding set of paths is drawn in blue.}
    \label{fig:generating_tree}
\end{figure}

Once the graph $G$ and set of weights $F$ have been computed, sampling from the Boltzmann distribution is straightforward. Start at the top (level 0), and then at each level choose one of the current node's children with probability proportional to that child's weight $F$. 

In \cref{fig:samples} we illustrate some walks of length 400 in the strip of width 10, for a few different values of $(a,b,c)$.

\begin{figure}[ht]
\centering
\begin{subfigure}{\textwidth}
\includegraphics[width=\textwidth]{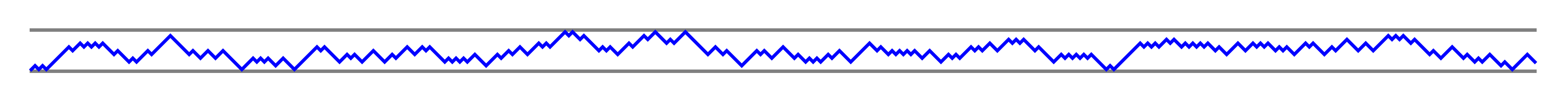}
\caption{$(a,b,c) = (1,1,1)$}
\end{subfigure}
\begin{subfigure}{\textwidth}
\includegraphics[width=\textwidth]{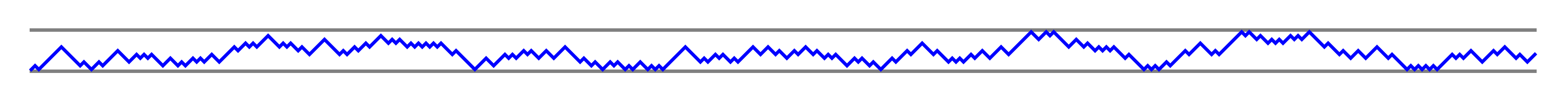}
\caption{$(a,b,c) = (2,2,1)$}
\end{subfigure}
\begin{subfigure}{\textwidth}
\includegraphics[width=\textwidth]{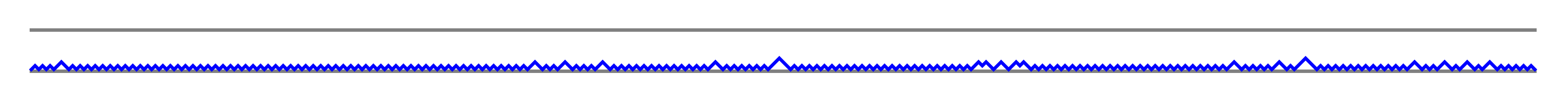}
\caption{$(a,b,c) = (10,10,1)$}
\end{subfigure}
\begin{subfigure}{\textwidth}
\includegraphics[width=\textwidth]{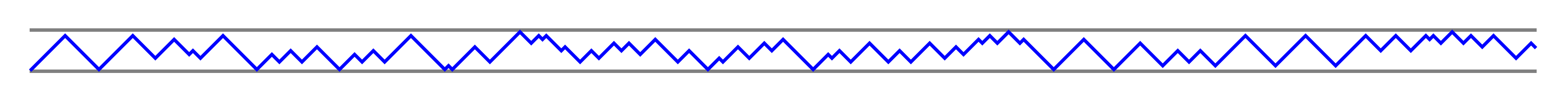}
\caption{$(a,b,c) = (1,1,5)$}
\end{subfigure}
\begin{subfigure}{\textwidth}
\includegraphics[width=\textwidth]{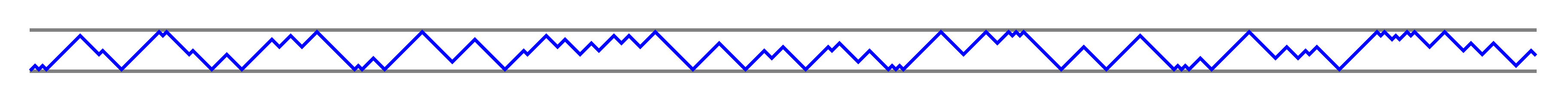}
\caption{$(a,b,c) = (6,6,5)$}
\end{subfigure}
\begin{subfigure}{\textwidth}
\includegraphics[width=\textwidth]{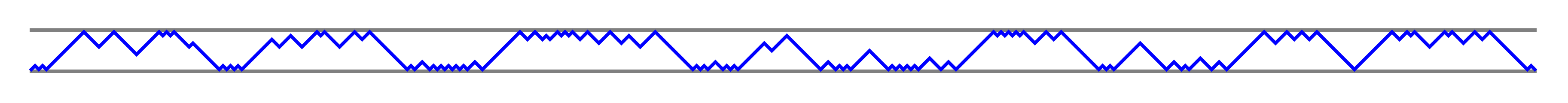}
\caption{$(a,b,c) = (20,20,5)$}
\end{subfigure}
\begin{subfigure}{\textwidth}
\includegraphics[width=\textwidth]{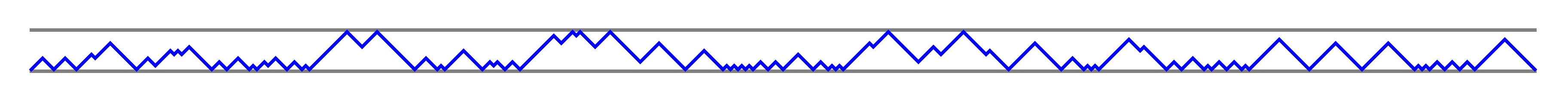}
\caption{$(a,b,c) = (10,\frac{34}{9},5)$}
\end{subfigure}
\caption{Walks of length 400 in the strip of width 10, sampled from the Boltzmann distribution at various values of $(a,b,c)$.}
\label{fig:samples}
\end{figure}

\section{Conclusion}\label{sec:conclusion}

We have defined, solved and analysed a model of semiflexible linear polymers in a strip, interacting with the two walls of the strip. Along the surface $ab-a-b-c^2+1=0$ in $(a,b,c)$-space the polymers exert zero net force on the walls of the strip, while on either side of this surface the polymers work to either push the walls apart or pull them together. As $c$ is increased, the values of $a$ and $b$ required to induce a negative force (that is, to pull the walls together) also increases.

There are a number of possible ways this work can be extended or generalised. The most obvious way is to move from directed walks to SAWs; however, that model is not solvable for general $w$ using current technology (for very small $w$ the transfer matrix can be computed exactly). Monte Carlo methods may yield useful results, however. A more modest extension might involve Motzkin paths (which allow a horizontal step in addition to the diagonal steps used here) or partially directed walks (using steps $(1,0), (0,1)$ and $(0,-1)$).

Instead of (or in addition to) modelling semiflexible polymers, one can model self-interacting polymers by assigning a weight (say, $u$) to each nearest-neighbour pair of occupied sites. The effect of increasing $u$ should be qualitatively similar to increasing $c$ -- for large $u$, polymers will tend to form compact `globules', and this will serve to push the walls apart more strongly. 

\section*{Acknowledgements}

NRB is supported by Australian Research Council grant DE170100186. JL and LL were supported by Vacation Research Scholarships from the Australian Mathematical Sciences Institute. 

\sloppy
\printbibliography
\fussy

\end{document}